\documentclass[11pt]{article}
%\documentclass[aps,pra,10pt]{revtex4}
%\documentclass[aps,pra,10pt,notitlepage]{revtex4-1}

%\linespread{1.5}

%%%%%%%%%%%%%%%%%%%%%
% PACKAGES          %
%%%%%%%%%%%%%%%%%%%%%
% AMS
\usepackage{amsmath,amssymb,amsthm}
%\usepackage{mathabx}
% TikZ
\usepackage{color}
\usepackage[thinlines,thiklines]{easybmat}
\usepackage{physics}

\usepackage{algorithm}
\usepackage{algorithmic}
\usepackage{enumitem}
\setlist{nolistsep}

\usepackage{verbatim}
%\usepackage{appendix}
% Palation font
\usepackage{palatino}
\usepackage{mathpazo}
\usepackage{stmaryrd}
% Page margins
\usepackage[margin=1in]{geometry}
% To vertically center the colon in ":="
\usepackage{mathtools}
\mathtoolsset{centercolon}
% For restating theorems
%\usepackage{thm-restate}

%%%%%%%%%%%%%%%%%%%%%
% NEWTHEOREM        %
%%%%%%%%%%%%%%%%%%%%%
\newtheorem{theorem}{Theorem}
\newtheorem{proposition}{Proposition}

\theoremstyle{definition}
\newtheorem{definition}{Definition}

% Hyperref
\usepackage{hyperref}
% color definitions
\definecolor{darkred}  {rgb}{0.5,0,0}
\definecolor{darkblue} {rgb}{0,0,0.5}
\definecolor{darkgreen}{rgb}{0,0.5,0}
% color links for screen version
\hypersetup{
  % Title and authors for the PDF file info
  pdftitle = {Round Complexity in Key Distillation},
  pdfauthor = {Eric Chitambar},
  % Link colors
  colorlinks = true,
  urlcolor  = blue,         % color of external links
  linkcolor = darkblue,     % color of internal links
  citecolor = darkgreen,    % color of links to bibliography
  filecolor = darkred       % color of file links
}

%%%%%%%%%%%%%%%%%%%%%
% NEWCOMMAND        %
%%%%%%%%%%%%%%%%%%%%%
\newcommand{\locc}{\overset{\underset{\mathrm{LOCC}}{}}{\longrightarrow}}
\newcommand{\lopc}{\overset{\underset{\mathrm{LOPC}}{}}{\longrightarrow}}

\newcommand{\mc}[1]{\mathcal{#1}}
\newcommand{\mbf}[1]{\mathbf{#1}}
\newcommand{\mbb}[1]{\mathbb{#1}}
\newcommand{\wh}[1]{\widehat{#1}}

\newcommand{\ol}[1]{\overline{#1}}

\title{Round Complexity in the Local Transformations of Quantum and Classical States}

\author{Eric Chitambar $^1$, Min-Hsiu Hsieh $^2$
\\[4mm]
\textit{$^1$ Department of Physics and Astronomy, Southern Illinois University,}\\ 
\textit{Carbondale, Illinois 62901, USA}\\
\textit{$^2$ Centre for Quantum Software and Information (CQSI),}\\
\textit{Faculty of Engineering and Information Technology (FEIT),}\\
\textit{University of Technology Sydney (UTS), NSW 2007, Australia}}

\date{\today}

\begin{document}
\maketitle

\begin{abstract}

A natural operational paradigm for distributed quantum and classical information processing involves local operations coordinated by multiple rounds of public communication.  In this paper we consider the minimum number of communication rounds needed to perform the locality-constrained task of entanglement transformation and the analogous classical task of secrecy manipulation.  Specifically we address whether bipartite mixed entanglement can always be converted into pure entanglement or whether unsecure classical correlations can always be transformed into secret shared randomness using local operations and a \textit{bounded} number of communication exchanges.  Our main contribution in this paper is an explicit construction of quantum and classical state transformations which, for any given $r$, can be achieved using $r$ rounds of classical communication exchanges but no fewer.  Our results reveal that highly complex communication protocols are indeed necessary to fully harness the information-theoretic resources contained in general quantum and classical states.  The major technical contribution of this manuscript lies in proving lower bounds for the required number of communication exchanges using the notion of common information and various lemmas built upon it.  We propose a classical analog to the Schmidt rank of a bipartite quantum state which we call the secrecy rank, and we show that it is a monotone under stochastic local classical operations assisted by iterative classical communication.

\end{abstract}

\section{Introduction}

One of the most fascinating aspects of quantum information is how \textit{classical} communication can enhance \textit{quantum} information processing.  For instance, ``mixed'' entanglement shared between two or more parties can be ``purified'' when the parties are allowed to perform \textbf{L}ocal quantum \textbf{O}perations on their subsystems and \textbf{C}ommunicate \textbf{C}lassically with one another \cite{Bennett-1996a}, a process known as LOCC.  An analog of this purification procedure can be found in the classical theory of secret correlations.  Rather surprisingly, secret correlations shared between two or more parties can be strengthened by the parties performing \textbf{L}ocal stochastic \textbf{O}perations and ``leaking'' information partially through \textbf{P}ublic \textbf{C}ommunication \cite{Bennett-1986a, Maurer-1993a, Ahlswede-1993a}, a process known as LOPC.  In both settings, the principle is the same: resource manipulation (whether it be entanglement or secrecy) becomes more powerful when public classical communication is allowed.

In a communication protocol, the parties take turns exchanging information with one another, information that is extracted locally from their respective subsystems and earlier rounds of communication.  There are different ways of measuring the role that communication plays in a distributed protocol. In the topic of communication complexity, the figure of merit is the number of bits communicated between parties during the protocol.  Classical communication complexity theory was introduced in the seminal paper by Yao \cite{Yao-1979a} and later extended to the quantum setting in Ref.~\cite{Yao-1993a}. On the other hand, the subject of \textbf{round complexity} studies how the number and ordering of interactive communication exchanges affect the parties' ability to perform a particular task, an equally important figure of merit. However despite it being a very natural question to consider, the exact manner and extent to which rounds of communication help resource manipulation is largely unknown. This paper investigates the round complexity of transforming one quantum/classical state to another using LOCC/LOPC.

Previous work on LOCC/LOPC round complexity is relatively limited.  On the classical side, the studies of round complexity have largely been focused on models of interactive function computation \cite{Kaspi-1985a, Beaver-1990a, Orlitsky-1991a, Ma-2011a}.   On the quantum side, clear separations between one-round and two-round protocols have been demonstrated for various quantum information-processing tasks such as asymptotic entanglement distillation \cite{Bennett-1996a}, tripartite entanglement transformations \cite{Chitambar-2011a}, quantum state discrimination  \cite{Xin-2007a, Owari-2008a, Chitambar-2013a, Nathanson-2013a, Chitambar-2013c}, and recently, the simulation of nonlocal gates using shared entanglement \cite{Wakakuwa-2016a}. However, none of these results have been able to establish an operational separation between each of the finite-round LOCC classes in terms of manipulating a single quantum state.  That is, it has been previously unknown whether or not for every finite $r$ there exists an LOCC transformation $\rho\locc\sigma$ that is possible only if $r$ rounds of communication are used.  Such a phenomenon might be unexpected given that every bipartite pure-state transformation $\ket{\psi}\locc\ket{\phi}$ can be accomplished in just one round of LOCC, regardless of the dimensions \cite{Nielsen-1999a}.  In contrast to this elegant result, we show in this paper that $r$ rounds of communication are indeed required to perform certain LOCC transformations.

The operational tasks we explore involve extracting pure-state entanglement from some mixed quantum state using LOCC and the classical analog of extracting secret shared randomness from an unsecure classically-correlated state using LOPC.  These are two very important questions since pure-state entanglement is the fundamental building block for quantum information processing \cite{Horodecki-2009a}, and likewise, secret key states provide the essential ingredient for information-theoretic secure communication \cite{Shannon-1949a, Devetak-2005b}.  Understanding the relationship between quantum entanglement and classical secrecy offers an intriguing research directions with many interesting connections already found \cite{Collins-2002a, Gisin-2002b, Acin-2003b, Horodecki-2005c, Acin-2005b, Christandl-2007a, Oppenheim-2008a, Bae-2009a, Ozols-2014a, Chitambar-2014d}.  This paper describes another similarity between the two in terms of LOCC/LOPC round complexity.

As our main result, we construct families of quantum (resp. classical) states for which a minimum of $r$ communication rounds is both necessary and sufficient to obtain pure-state entanglement (resp. secret shared randomness).  Our findings imply that there exists no universal upper bound on the number of LOCC/LOPC rounds needed to perform such tasks, universal in the sense that it holds for states of all dimensions/alphabet size.  Rigorously proving this claim is a highly non-trivial matter since the general structure of LOCC and LOPC protocols is quite complex, allowing for arbitrary local operations and arbitrary interactive communication schemes \cite{Chitambar-2014b}.  With this complexity, it is difficult to definitively rule out the possibility of some clever round-compression technique that could always reduce the number of communication exchanges below some finite threshold, regardless of the system sizes.  In fact, such a clever round-compression strategy is precisely what allows for the restriction to just one-way protocols for all bipartite pure-state transformations $\ket{\psi}\locc\ket{\phi}$ \cite{Lo-2001a}.  %A formal introduction of the LOCC and LOPC frameworks is included in Sec.~\ref{Sect:Frameworks} in the Supplemental Material. 

The paper is structured as follows.  We begin by fixing notation and providing a brief overview of the LOCC and LOPC frameworks.  In Section \ref{Sect:Origami States} we recursively construct a family of tripartite distributions that we call the origami distributions, due to the ``unfolding'' appearance of the construction.  Using these distributions, we define tripartite quantum states through the embedding $p_{xyz}\to\sum_{xyz}\sqrt{p_{xyz}}\ket{xyz}$.  Our main result is contained in Theorem \ref{thm:main}, and its proof is then carried out in Section \ref{Sect:Unified-proofs}.  An essential tool used in the proof is the classical information-theoretic object known as the G\'{a}cs-K\"{o}rner Common Information  \cite{Gacs-1973a}, which we review in Section \ref{Sect:Common-Info}.  Concluding comments and discussion are provided in Section \ref{Sect:Conclusion}.

\subsection*{The LOCC and LOPC Frameworks}

%\label{Sect:Frameworks}

The problems studied in this paper involve two trustworthy parties (Alice and Bob) and one unwanted third party (Eve).  When Alice, Bob, and Eve are holding quantum systems, we denote their joint state by $\rho^{ABE}$.  In contrast when Alice, Bob, and Eve are holding random variables $X$, $Y$, and $Z$, we denote their joint probability distribution by $p^{XYZ}$.  These variables range over sets $\mc{X}$,  $\mc{Y}$, and $\mc{Z}$ respectively, and the probability of event $(x,y,z)$ will be denoted by $p^{XYZ}_{xyz}$.  When the underlying random variables are clear, we will simply write the probabilities by $p_{xyz}$.  Conditional probabilities are denoted, for example, by $p_{xy|z}^{XY|Z=z}$.  

In an LOCC protocol, Alice and Bob take turns performing a local quantum instrument, which is a collection of completely positive (CP) maps $\{\mc{E}_\lambda\}_\lambda$ such that $\sum_\lambda\mc{E}_\lambda$ is trace-preserving \cite{Chitambar-2014b}.  The index $\lambda$ represents the ``measurement outcome'' of the instrument which is communicated to the other party, thereby correlating the choice of future local instruments to previous measurement outcomes.  For the problem considered in this paper, we will be considering instruments in which each local CP maps has the form $\mc{E}_\lambda(\rho)=K_\lambda\rho K_\lambda^\dagger$, where the $\{K_\lambda\}_\lambda$ form a complete set of Kraus operators; i.e. $\sum_\lambda K_\lambda^\dagger K_\lambda=\mbb{I}$.  

In an LOPC protocol, Alice and Bob share random variables $X$ and $Y$ respectively.  They proceed with multiple iterations of public communication where the $i^{th}$ message $M_i$ is the stochastic output of a channel performed to $(P,M_{<i})$, where $P\in\{X,Y\}$ is the variable of the announcing party in the $i^{th}$ round and $M_{<i}=M_1\cdots M_{i-1}$ denotes the sequence of messages generated in the previous $i-1$ rounds.  At the end of the protocol, Alice and Bob generate output variables $\wh{X}$ and $\wh{Y}$ that are obtained by processing $(X,M)$ and $(Y,M)$ respectively, where $M$ represents all communication variables generated throughout the protocol.  For both LOCC and LOPC, an $r$-round protocol consists of $r$ classical communication exchanges between the parties.

One conceptual difference between the LOCC and LOPC settings is that in the latter, the presence of an unwanted eavesdropping party is always taken into account.  Thus, a copy of the public communication $M$ is shared by Eve, and a general LOPC protocol generates a transformation of probability distributions
\begin{equation}
p^{XYZ}\lopc p^{\wh{X}\wh{Y}(ZM)}.
\end{equation}

The fundamental resource unit in entanglement theory is the entangled bit (ebit), which has the form $\ket{\Phi}^{AB}=\sqrt{1/2}(\ket{00}^{AB}+\ket{11}^{AB})$.  In classical secrecy theory, the basic resource unit is the secret bit (sbit).  This is any distribution over the sets $\{0,1\}\times\{0,1\}\times \mc{Z}$ of the form $p^{XYZ}_{xyz}=\frac{1}{2}\delta_{xy}p^Z_z$, where $p^Z$ is an arbitrary distribution for Eve.  Alice and Bob's main concern is how much Eve is correlated with their variables, rather than the specific distribution over her variable.  Hence, we will adopt the notation that $\Phi$ denotes a sbit, with Eve's uncorrelated distribution being unspecified.  For partially entangled two-qubit states and for non-uniform secret shared bits, we will write
\begin{align}
\ket{\Phi_\lambda}&=\sqrt{\lambda}\ket{00}+\sqrt{1-\lambda}\ket{11}\\
\Phi_\lambda&=\lambda \delta_{X0}\delta_{Y0}+(1-\lambda)\delta_{X1}\delta_{Y1}.\label{Eq:partialbit}
\end{align}
Here $\delta_{X0}$, for example, is the distribution over $\mc{X}$ that has $x=0$ with unit probability.  The entropy of $\Phi_\lambda$ is $h(\lambda)$, where $h(x)=-x\log x-(1-x)\log(1-x)$.

As a final bit of notation, for a pure state $\ket{\varphi}^{AB}$, we let $Srk(\ket{\varphi})$ denote the Schmidt rank of the state, which is equivalent to the ranks of the reduced density matrices $\rho^A=\tr_B\op{\varphi}{\varphi}$ and $\rho^B=\tr_A\op{\varphi}{\varphi}$.  In Section \ref{Sect:secrcyRank} we will introduce the notion of \textit{secrecy rank} for tripartite distributions $p^{XYZ}$.  Because of its operational similarity to the quantum Schmidt rank, we will likewise denote this classical quantity by $Srk(p^{XYZ})$.  

\section{Round Complexity in a Family of LOPC and LOCC Transformations}

\label{Sect:Origami States}

In this section we introduce a family of tripartite distributions, which we call the \textit{origami distributions}.  The family is given by the set $\{\mbf{b}^{(i,\lambda)}:i\in\mbb{N},\;0<\lambda\leq 1/2\}$, with $\mbf{b}^{(i,\lambda)}$ being a tripartite probability distribution taking on values $\mbf{b}^{(i,\lambda)}_{xyz}$ for each fixed pair of values $(i,\lambda)$; i.e. $\sum_{xyz}\mbf{b}^{(i,\lambda)}_{xyz}=1$.  The structure of these distributions is described recursively with $\mbf{b}^{(1,\lambda)}$ having the form:
\begin{align}
\label{Eq:B1}
\mbf{b}^{(1,\lambda)}=\begin{BMAT}{ccc}{cc}
{}&\qquad x&{}\\
 \begin{BMAT}{r}{cc}
 \vphantom{\text{\Huge Y}}\\y
 \end{BMAT}&\begin{BMAT}{c|cccc}{c|cccc}
{}&0&1&2&3\\
0&0&\cdot &\cdot&1\\
1&\cdot&0&1&\cdot\\
2&2&3&\cdot&\cdot\\
3&\cdot&\cdot&3&2
\end{BMAT}
& \begin{BMAT}{r}{cccc}
\vphantom{\text{\Huge z}}\\ \vphantom{\text{\Huge z}}\\\vphantom{\text{\Huge z}}\\z
 \end{BMAT}
\end{BMAT}
\end{align}
The 8 events of $(x,y,z)$ having nonzero probability in $\mbf{b}^{(1,\lambda)}$ are those in which $z$ lies in row $y$ and column $x$, as shown in this grid.  The probabilities of these events are $\mbf{b}^{(1,\lambda)}_{xy|z}=\lambda$ for even values of $x$, $\mbf{b}^{(1,\lambda)}_{xy|z}=1-\lambda$ for odd values of $x$, and $\mbf{b}^{(1,\lambda)}_z=1/4$.  In other words, $\mbf{b}^{(1,\lambda)}$ consists of four blocks of uniform probability, each corresponding to a different value of $z$.  Within each block $X$ and $Y$ are perfectly correlated but with a non-uniform distribution $(\lambda,1-\lambda)$.

For each fixed value of $\lambda$, we now proceed to build the $i^{th}$ distribution in the family $\{\mbf{b}^{(i,\lambda)}:i\in\mbb{N},\;0<\lambda\leq 1/2\}$ according to the following prescription:
\begin{align}
\begin{cases} \text{Define:}\quad \ol{\mbf{b}^{(1,\lambda)}}=\begin{BMAT}{ccc}{cc}
{}&\qquad x&{}\\
 \begin{BMAT}{r}{cc}
 \vphantom{\text{\Huge Y}}\\y
 \end{BMAT}&\begin{BMAT}{c|cccc}{c|cccc}
{}&0&1&2&3\\
0&4&\cdot &\cdot&5\\
1&\cdot&\cdot&7&6\\
2&6&7&\cdot&\cdot\\
3&\cdot&4&5&\cdot
\end{BMAT}
& \begin{BMAT}{r}{cccc}
\vphantom{\text{\Huge z}}\\ \vphantom{\text{\Huge z}}\\\vphantom{\text{\Huge z}}\\z
 \end{BMAT}
\end{BMAT},\\
{}\\
\text{Even $n$:} \quad \mbf{b}^{(n,\lambda)}=\begin{bmatrix}\mbf{b}^{(n-1,\lambda)}&\ol{\mbf{b}^{(n-1,\lambda)}}\end{bmatrix}\qquad (\text{size: $\;2^{n/2+2}\times 2^{n/2+1}\times 2^{n+1}$)},\\
{}\\
\text{Odd $n$:} \quad \mbf{b}^{(n,\lambda)}=\begin{bmatrix}\mbf{b}^{(n-1,\lambda)}\\\ol{\mbf{b}^{(n-1,\lambda)}}\end{bmatrix}\qquad (\text{size: $\;2^{(n-1)/2+2}\times 2^{(n-1)/2+2}\times 2^{n+1}$)},
\end{cases}
\end{align}
where $\ol{\mbf{b}^{(n,\lambda)}}$ is obtained from $\mbf{b}^{(n,\lambda)}$ by interchanging the row (resp. column) $i$ with row (resp. column) $i+2^{\lfloor n/2+1\rfloor}$ for all odd $i$ whenever $n$ is odd (resp. even), and Eve's values are increased by $2^n$ from the original values in $\mbf{b}^{(n,\lambda)}$.  In each grid, all of Eve's values are still equiprobable, and for each value of $z$, Alice and Bob have shared randomness with $\mbf{b}^{(n,\lambda)}_{xy|z}=\lambda$ for even values of $x$.  For example,
\begin{align}
\mbf{b}^{(2,\lambda)}&=\begin{BMAT}{ccc}{cc}
{}&\qquad x&{}\\
 \begin{BMAT}{r}{cc}
 \vphantom{\text{\Huge Y}}\\y
 \end{BMAT}&\begin{BMAT}{c|cccc:cccc}{c|cccc}
{}&0&1&2&3&4&5&6&7\\
0&0&\cdot&\cdot&1&4&\cdot&\cdot&5\\
1&\cdot&0&1&\cdot&\cdot&\cdot&7&6\\
2&2&3&\cdot&\cdot&6&7&\cdot&\cdot\\
3&\cdot&\cdot&3&2&\cdot&4&5&\cdot
\end{BMAT}
& \begin{BMAT}{r}{cccc}
\vphantom{\text{\Huge z}}\\ \vphantom{\text{\Huge z}}\\\vphantom{\text{\Huge z}}\\z
 \end{BMAT}
\end{BMAT}
&\mbf{b}^{(3,\lambda)}&=\begin{BMAT}{ccc}{cc}
{}&\qquad x&{}\\
 \begin{BMAT}{r}{cc}
 \vphantom{\text{\Huge Y}}\\y
 \end{BMAT}&\begin{BMAT}{c|cccccccc}{c|cccc:cccc}
{}&0&1&2&3&4&5&6&7\\
0&0&\cdot&\cdot&1&4&\cdot&\cdot&5\\
1&\cdot&0&1&\cdot&\cdot&\cdot&7&6\\
2&2&3&\cdot&\cdot&6&7&\cdot&\cdot\\
3&\cdot&\cdot&3&2&\cdot&4&5&\cdot\\
4&8&\cdot&\cdot&13&12&\cdot&\cdot&9\\
5&\cdot&\cdot&9&14&\cdot&8&15&\cdot\\
6&10&15&\cdot&\cdot&14&11&\cdot&\cdot\\
7&\cdot&12&11&\cdot&\cdot&\cdot&13&10
\end{BMAT}
& \begin{BMAT}{r}{ccccccc}
\vphantom{\text{\Huge z}}\\\vphantom{\text{\Huge z}}\\ \vphantom{\text{\Huge z}}\\\vphantom{\text{\Huge z}}\\ \vphantom{\text{\Huge z}}\\\vphantom{\text{\Huge z}}\\z
 \end{BMAT}
\end{BMAT}
\end{align}

\begin{align}
\mbf{b}^{(4,\lambda)}&=\begin{BMAT}{ccc}{cc}
{}&\qquad x&{}\\
 \begin{BMAT}{r}{cc}
 \vphantom{\text{\Huge Y}}\\y
 \end{BMAT}&\begin{BMAT}{c|cccccccc:cccccccc}{c|cccccccc}
{}&0&1&2&3&4&5&6&7&8&9&10&11&12&13&14&15\\
0&0&\cdot&\cdot&1&4&\cdot&\cdot&5&16&\cdot&\cdot&17&20&\cdot&\cdot&21\\
1&\cdot&0&1&\cdot&\cdot&\cdot&7&6&\cdot&\cdot&25&30&\cdot&24&31&\cdot\\
2&2&3&\cdot&\cdot&6&7&\cdot&\cdot&18&19&\cdot&\cdot&22&23&\cdot&\cdot\\
3&\cdot&\cdot&3&2&\cdot&4&5&\cdot&\cdot&28&27&\cdot&\cdot&\cdot&29&26\\
4&8&\cdot&\cdot&13&12&\cdot&\cdot&9&24&\cdot&\cdot&29&28&\cdot&\cdot&25\\
5&\cdot&\cdot&9&14&\cdot&8&15&\cdot&\cdot&16&17&\cdot&\cdot&\cdot&23&22\\
6&10&15&\cdot&\cdot&6&11&\cdot&\cdot&26&31&\cdot&\cdot&30&27&\cdot&\cdot\\
7&\cdot&12&11&\cdot&\cdot&\cdot&13&10&\cdot&\cdot&19&18&\cdot&20&21&\cdot
\end{BMAT}
& \begin{BMAT}{r}{ccccccc}
\vphantom{\text{\Huge z}}\\\vphantom{\text{\Huge z}}\\ \vphantom{\text{\Huge z}}\\\vphantom{\text{\Huge z}}\\ \vphantom{\text{\Huge z}}\\\vphantom{\text{\Huge z}}\\z.
 \end{BMAT}
\end{BMAT}
\end{align}

%The second set of distributions is $\{\mbf{b}(i,\lambda):i\in\mbb{N},\;0<\lambda<1\}$ where $\mbf{b}(i,\lambda)$ has the exact same structure as $\mbf{a}(i,\lambda)$ except that Alice and Bob's shared randomness for each value of $z$ now satisfies $\mbf{b}(i,\lambda)_{xy|z}=\lambda$ for even values of $y$ (instead of even values of $x$.  Thus, in the distribution $\mbf{a}(i,\lambda)$ Alice can always locally determine whether an event with conditional probability $\lambda$ occurs, whereas in $\mbf{b}(i,\lambda)$ Bob can make this determination from his value alone.

We now use the origami distributions to construct bipartite quantum states.  This is accomplished by first embedding each distribution $\mbf{b}^{(i,\lambda)}$ into a tripartite quantum state according to
\begin{align}
\ket{\mbf{b}^{(i,\lambda)}}^{ABE}&=\sum_{x,y,z}\sqrt{\mbf{b}^{(i,\lambda)}_{xyz}}\ket{x}^A\ket{y}^B\ket{z}^E\notag\\
&=\frac{1}{\sqrt{2^{i+1}}}\sum_z\ket{\psi_z^{(i,\lambda)}}^{AB}\ket{z}^E,\label{Eq:origami-quantum}
\end{align}
where $\ket{\psi_z^{(i,\lambda)}}^{AB}:=\sum_{x,y}\sqrt{\mbf{b}^{(i,\lambda)}_{xy|z}}\ket{x}^A\ket{y}^B$.  Notice that the von Neumann entropy of this state is $h(\lambda)$ for every $z$ and $i$.  Alice and Bob's reduced state is then given by $\rho_{\mbf{b}}^{(i,\lambda)}:=\tr_E\left(|\mbf{b}^{(i,\lambda)}\rangle\langle\mbf{b}^{(i,\lambda)}|\right)$.  The main results of this paper are stated in the following theorem.
\begin{theorem}
\label{thm:main}
For any pair $(r,\lambda)$ and any $0<\lambda'\leq 1/2$, the LOPC transformation 
\begin{align}
\mbf{b}^{(r,\lambda)}&\lopc \Phi_{\lambda'}
\end{align}
and the LOCC transformation
\begin{align}
\rho_{\mbf{b}}^{(r,\lambda)}&\locc\op{\Phi_{\lambda'}}{\Phi_{\lambda'}}
\end{align}
are both impossible using $r-1$ rounds of communication exchanges; nor are they possible in $r$ rounds if Alice (resp.~Bob) is the first to announce when $r$ is odd (resp.~even).  Conversely, for $\lambda'\leq \lambda\leq 1/2$ the transformations are possible in $r$ rounds if Bob (resp.~Alice) is the first to announce when $r$ is odd (resp.~even).
\end{theorem}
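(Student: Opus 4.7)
The plan is to prove the achievability and impossibility assertions separately, then to connect the classical (LOPC) and quantum (LOCC) versions. For the \emph{achievability} direction I would induct on $r$ using the recursive construction of $\mbf{b}^{(n,\lambda)}$. The base case $r=1$ is handled by an explicit one-round protocol in which Bob's single announcement coarse-grains his variable just enough to let both parties subsequently map their data into a partial secret bit $\Phi_{\lambda'}$ for any $\lambda'\leq\lambda$. For the inductive step, the definition of $\mbf{b}^{(n,\lambda)}$ as the vertical (for odd $n$) or horizontal (for even $n$) concatenation of $\mbf{b}^{(n-1,\lambda)}$ and the twisted copy $\ol{\mbf{b}^{(n-1,\lambda)}}$ suggests the natural protocol in which the party dictated by parity (Bob for odd $n$, Alice for even) announces which half of the concatenation they inhabit; after a local relabeling that absorbs the twist, the residual distribution becomes $\mbf{b}^{(n-1,\lambda)}$ with the opposite party now scheduled to speak, and the induction hypothesis supplies the remaining $n-1$ rounds. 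Quantum achievability follows by adapting this protocol so that each classical announcement is implemented as a partial projective measurement of the form $\{\sum_{y:f(y)=m}\op{y}{y}\}_m$ on the announcing register, preserving coherence within each block of the origami grid and yielding $\op{\Phi_{\lambda'}}{\Phi_{\lambda'}}$ at the quantum level.

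For the \emph{impossibility} direction---the technical core of the theorem---I would identify a monotone under bounded-round LOPC that quantitatively captures the depth of the origami recursion. The natural candidates are the G\'acs--K\"orner conditional common information $K(X;Y\mid Z,M)$, with $M$ the accumulated public transcript, and the secrecy rank to be introduced in Section~\ref{Sect:secrcyRank}. The proof would then rest on four facts, each calling for its own lemma: first, $\mbf{b}^{(r,\lambda)}$ evaluates this monotone to a quantity reflecting depth $r$; second, the target $\Phi_{\lambda'}$ attains the minimum; third, any single round of LOPC reduces the monotone by at most one; and fourth---the subtlest---a round initiated by the wrong party (Alice for odd $r$ or Bob for even $r$) does not reduce it at all. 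Together these yield both the $r$-round lower bound and the parity constraint on the opening speaker. The first two facts follow directly from the recursive construction; the last two reflect the key structural property that the twisting operation $\mbf{b}\mapsto\ol{\mbf{b}}$ swaps the block structures visible to Alice and Bob, so only the correctly-scheduled party can peel off a layer of the recursion without coupling their local information to Eve's.

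The LOCC impossibility then follows from the LOPC impossibility by a standard purification argument: given any LOCC protocol $\rho_{\mbf{b}}^{(r,\lambda)}\locc\op{\Phi_{\lambda'}}{\Phi_{\lambda'}}$, lifting it to act on the purification $\ket{\mbf{b}^{(r,\lambda)}}^{ABE}$ leaves Eve's register intact; performing the same protocol and then measuring Alice, Bob, and Eve in the computational basis yields an LOPC protocol producing $\Phi_{\lambda'}$ in the same number of rounds. Hence any LOCC protocol of the forbidden type would contradict the LOPC bound just established.

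The principal obstacle lies in making facts three and four rigorous against arbitrary local stochastic maps, not merely naive coordinate-announcement protocols. A general Kraus-like map may coarse-grain the distribution in subtle ways that affect several layers of the origami recursion simultaneously, so the crucial lemma must establish that any decrease in the monotone induced by the wrong party either leaks equivalent information to Eve---thereby destroying whatever secrecy is gained---or leaves the top-level block structure intact. The recursive self-similarity of $\mbf{b}^{(n,\lambda)}$ and its twist $\ol{\mbf{b}^{(n,\lambda)}}$ should provide the inductive leverage needed to reduce any putative round-economizing protocol to one respecting the alternating-speaker schedule dictated by the theorem.
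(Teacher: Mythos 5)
Your achievability sketch matches the paper's protocol (alternate announcements of the half-block index, i.e.\ of each party's common information with Eve, followed by a final $\Phi_\lambda\to\Phi_{\lambda'}$ conversion folded into the last message), and that part is fine. The impossibility plan, however, has a genuine gap at its core. You propose a scalar monotone that evaluates to something ``reflecting depth $r$'' on $\mbf{b}^{(r,\lambda)}$, decreases by at most one per correctly-scheduled round, and not at all on a wrong-party round. Neither candidate you name has the first property: by Proposition~\ref{Prop:DistProp} the conditional common information $H(J_{X^{(n)}Y^{(n)}|Z^{(n)}}|Z^{(n)})$ equals $h(\lambda)$ for every $n$, and the secrecy rank of every conditional block $p^{XY|Z=z}$ equals $2$ for every $n$; both are independent of the recursion depth. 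More fundamentally, your facts (iii) and (iv) --- ``one round peels off at most one layer'' --- are essentially a restatement of the theorem rather than a route to proving it; the whole difficulty is excluding a clever round-compression by an arbitrary local instrument, and no per-round decrement bound on a single scalar is established or even plausibly identified. The paper's actual argument is not of this form: it is a branch-tracking induction showing that along \emph{some} outcome branch after $k$ rounds an entire block $\mc{B}^{(r-k)}_{j_1,\dots,j_k}$ of conditional states survives unelminated. The engine is rank preservation (Schmidt rank quantum, secrecy rank classical via Theorem~\ref{thm:rankmonotone}) combined with the connectivity induced by trivial common information (Proposition~\ref{Prop:QuantumCI}): a Kraus operator of the party who shares no common information with Eve must either annihilate every conditional state in the block or none of them, and completeness guarantees a surviving outcome. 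After $r-1$ rounds a full copy of $\mbf{b}^{(1,\lambda)}$ survives, and Proposition~\ref{Prop:QuantumNoCC} shows its four entangled states cannot all be finished off without further communication. Your plan contains none of this machinery.

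The second gap is the claimed reduction of LOCC impossibility to LOPC impossibility. The simulation goes the wrong way: an LOPC protocol can always be simulated by LOCC (measure first, then run the classical protocol), but a general LOCC protocol on $\rho_{\mbf{b}}^{(r,\lambda)}$ uses local quantum instruments that are not stochastic maps on the classical variables $X,Y$, and its intermediate states need not be diagonal in the computational basis; appending computational-basis measurements at the end does not yield an LOPC protocol on $\mbf{b}^{(r,\lambda)}$ with the same round count. Consequently ``LOPC impossible $\Rightarrow$ LOCC impossible'' does not follow, and the paper accordingly proves the quantum lower bound directly (in fact it proves the quantum case first and then ports the argument to the classical case via the secrecy rank). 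You would need to either prove the LOCC bound directly or supply a genuinely new argument for why this particular family of embedded states admits such a dequantization.
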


In Section \ref{Sect:Unified-proofs} we prove this theorem using a unified argument that applies to both the quantum and classical problems.  The main idea is that in either a quantum or classical protocol, completing the desired transformation requires that the rank of the state remain invariant at each step of the protocol.  The notion of ``rank'' here refers to the Schmidt rank in the quantum case, and the secrecy rank in the classical case, a quantity we introduce in Section \ref{Sect:secrcyRank}.  The origami distributions are designed in such a way that obtaining the desired target state in $r-1$ rounds along one branch will necessarily cause the rank of the state to decrease along another branch.  The ranks can only be preserved along all branches if the protocol is carried out for $r$ total rounds.  In order to reach these conclusions, we need a way to compactly represent and analyze the structure of the origami distributions for arbitrary $r$.  To this end we turn to the notion of \textit{common information} between random variables, in the sense proposed by G\'{a}cs and K\"{o}rner.  We first briefly review some general properties of the common information, and then we apply it to the origami distributions.

\section{G\'{a}cs-K\"{o}rner Common Information}

\label{Sect:Common-Info}
 
For a general pair of random variables $AB$ with distribution $p^{AB}$, there exists a \textit{maximal} common variable $J_{AB}$ in the sense that $J_{AB}$ can be computed exactly from either $A$ or $B$, and any other such common function of $A$ and $B$ is itself a function of $J_{AB}$.  Hence, up to relabeling, the variable $J_{AB}$ is unique for each pair of variables $AB$, and G\'{a}cs and K\"{o}rner identify $H(J_{AB})$ as the common information of $AB$ \cite{Gacs-1973a}.  For values $a,a'\in\mc{A}$, it is not difficult to show that $J_{AB}(a)=J_{AB}(a')$ iff there exists a sequence of values
\begin{equation}
\label{Eq:communicating}
ab_1a_1b_2a_2\cdots a_n a'
\end{equation}
with $a,a_1,\cdots, a_n,a'\in\mc{A}$ and $b_1,\cdots b_{n}\in\mc{B}$ such that $p_{ab_1}p_{b_1a_1}p_{a_1b_2}\cdots p_{b_na'}>0$ \cite{Gacs-1973a, Chitambar-2014c}.  

%%%%

One can go further and introduce the \textit{maximal conditional common function} \cite{Chitambar-2014c, Chitambar-2014d}.  For three random variables $ABC$, a maximal conditional common function $J_{AB|C}$ is the collection of variables $\{J_{AB|C=c}:c\in\mc{C}\}$ with $J_{AB|C=c}$ being a maximal common function of the conditional distribution $p^{AB|C=c}$.  The variable $J_{XY|Z}$ is again unique for every distribution $p_{XYZ}$ up to relabeling.  For all distributions considered in this paper, including the origami distributions, we will assume that some canonical ordering has be fixed (and known to all parties) so that we may speak unambiguously of \textit{the} maximal common function $J_{AB}$ and \textit{the} maximal conditional common function $J_{AB|C}$.

Let us now analyze the origami distributions in terms of the G\'{a}cs-K\"{o}rner common information.  We first focus on random variables $XYZ$ whose distribution is given by $\mbf{b}^{(1,\lambda)}$.  From Eq. \eqref{Eq:communicating} and the graphical representations of \eqref{Eq:B1}, we can see that $x$ and $x'$ satisfy $J_{XZ}(x)=J_{XZ}(x')$ iff a path connects the columns corresponding to $x$ and $x'$ such that movement (possibly diagonal) from one column occurs only through a common value of $z$.  A similar rule stipulates that $y$ and $y'$ satisfies $J_{YZ}(y)=J_{YZ}(y')$ iff the rows corresponding to $y$ and $y'$ are connected by a path that only switches rows if a common value of $z$ belongs to both rows.  In Eq.~\eqref{Eq:B1}, we see that $J_{XZ}$ is constant (trivial) while $J_{YZ}$ is binary outcome with its value determined by whether $y\in\{0,1\}$ or $y\in\{2,3\}$.  In a similar way, $J_{XY}(x)=J_{XY}(x')$ iff the columns of $x$ and $x'$ can be connected by a path that changes columns iff those columns have possible events occurring in the same row.  Hence, $J_{XY}$ is trivial for the distribution depicted in Eq.~\eqref{Eq:B1}.

The origami distributions are constructed precisely to satisfy the following proposition, which can be proven by inspection and using simple inductive arguments following the discussion of the previous paragraph.
\begin{proposition}
\label{Prop:DistProp}
For any fixed value of $\lambda$, let $X^{(n)}Y^{(n)}Z^{(n)}$ denote random variables whose distribution is given by $\mbf{b}^{(n,\lambda)}$.  Then
\begin{enumerate}
\item The variable $J_{X^{(n)}Y^{(n)}}$ is trivial for all $n$;
\item For odd (resp.~even) $n$, the variable $J_{X^{(n)}Z^{(n)}}$ (resp. $J_{Y^{(n)}Z^{(n)}}$) is trivial while $J_{Y^{(n)}Z^{(n)}}$ (resp.~$J_{X^{(n)}Z^{(n)}}$) is binary;
\item For odd (resp.~even) $n$, the distribution $\mbf{b}^{(n,\lambda)}$ is equivalent (up to relabeling) to $\mbf{b}^{(n-1,\lambda)}$ when conditioned on the value of $J_{Y^{(n)}Z^{(n)}}$ (resp. $J_{X^{(n)}Z^{(n)}}$).
\item $I(X^{(n)}:Y^{(n)}|Z^{(n)})=H(J_{X^{(n)}Y^{(n)}|Z^{(n)}}|Z^{(n)})=h(\lambda)$ for all $n$. 
\end{enumerate}
\end{proposition}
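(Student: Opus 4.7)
The plan is to prove Proposition~\ref{Prop:DistProp} by induction on $n$. For the base case $n=1$, I would read the $4\times 4$ grid of Eq.~\eqref{Eq:B1} directly: all four columns turn out to be pairwise connected through shared rows and through shared $z$-values (so $J_{XY}$ and $J_{XZ}$ are trivial), while the rows split into the two $z$-communicating classes $\{0,1\}$ and $\{2,3\}$ (so $J_{YZ}$ is binary); Property~4 is immediate because each $z$-slice contains exactly two events with probabilities $(\lambda,1-\lambda)$.

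For the inductive step I would treat the even-$n$ case explicitly and leave the odd case to the symmetric $X\leftrightarrow Y$ argument. Assuming the proposition holds for $\mbf{b}^{(n-1,\lambda)}$ with $n-1$ odd, I will use the construction $\mbf{b}^{(n,\lambda)}=[\mbf{b}^{(n-1,\lambda)}\;\;\ol{\mbf{b}^{(n-1,\lambda)}}]$, noting crucially that the two blocks have disjoint $Z$-alphabets (they were shifted apart by $2^n$). Property~4 carries over verbatim since each conditional $\mbf{b}^{(n,\lambda)}_{xy|z}$ is still a $(\lambda,1-\lambda)$ pair. Property~3 follows because conditioning on which half of $Z$ one lies in recovers either $\mbf{b}^{(n-1,\lambda)}$ or its overbar, and the overbar is just a relabeling of the $Y$ and $Z$ coordinates. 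The $J_{XZ}$ part of Property~2 is equally clean: opposite halves share no $z$, and within a half the columns are already fully $z$-connected by the induction hypothesis, producing exactly the required two classes.

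The delicate remaining pieces are the triviality of $J_{X^{(n)}Y^{(n)}}$ and of $J_{Y^{(n)}Z^{(n)}}$. For the former, I would observe that every row of $\mbf{b}^{(n-1,\lambda)}$ has some nonzero entry, so every $y$ acts as a bridge between the two halves while the induction supplies column-connectivity within each half via shared rows. For the latter I need to track the binary $J_{YZ}$-partition through the recursion: the induction will give a partition $(A,B)$ of the $Y$-alphabet of $\mbf{b}^{(n-1,\lambda)}$ which the construction forces to the canonical form $A=\{0,\dots,2^{n/2}-1\}$, $B=\{2^{n/2},\dots,2^{n/2+1}-1\}$. The whole purpose of the row swap in $\ol{\cdot}$ is to misalign this partition inside $\ol{\mbf{b}^{(n-1,\lambda)}}$ into a ``checkerboarded'' version $(\tilde A,\tilde B)$ in which odd elements of $A$ get grouped with even elements of $B$. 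Once this is verified, any odd $y<2^{n/2}$ lies simultaneously in $A$ and $\tilde B$, while the element $2^{n/2}$ lies in both $B$ and $\tilde B$, so the zig-zag $A\to\tilde B\to B$ through such a bridge merges the two left-half classes when both grids are overlaid.

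The main obstacle I anticipate is pure bookkeeping: verifying that the row-swap really does take $(A,B)$ to the exact $(\tilde A,\tilde B)$ predicted above, and then checking that the newly created binary partition of $J_{X^{(n+1)}Z^{(n+1)}}$ (or $J_{Y^{(n+1)}Z^{(n+1)}}$) at the next level of the recursion again has the canonical ``first-half versus second-half'' form needed to keep the induction running. This is a routine parity check once one commits to writing it out, but it is the only step that uses the precise definition of the row/column swap rather than just the high-level ``disjoint $Z$-alphabets'' principle.
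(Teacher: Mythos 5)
Your proposal is correct and follows essentially the same route as the paper, which offers no detailed argument beyond asserting that the proposition "can be proven by inspection and using simple inductive arguments" based on the path-connectivity characterization of $J_{XY}$, $J_{XZ}$, $J_{YZ}$ developed just before the statement. Your induction — disjoint $Z$-alphabets forcing the binary split, the canonical first-half/second-half form of that split, and the parity misalignment induced by the row/column swap merging the classes at the next level — is exactly the intended argument, and the bookkeeping you flag does check out (e.g.\ for $n=2$ the swap sends $\{0,1\},\{2,3\}$ to $\{0,3\},\{1,2\}$, which overlaid with the originals connects all rows).
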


\section{The Proof of Theorem \ref{thm:main}}

\label{Sect:Unified-proofs}

\subsection{Achievability}

First consider the classical case.  Given Proposition \ref{Prop:DistProp}, it is easy to see that the transformation $\mbf{b}^{(r,\lambda)}\lopc\Phi_\lambda$ is possible in $r$ rounds: each party alternates in announcing his/her common information with Eve, with Alice (resp. Bob) going first when $r$ is even (resp. odd).  With the state $\Phi_\lambda$, the transformation to $\Phi_{\lambda'}$ can always be performed whenever $\lambda'<\lambda$ \cite{Collins-2002a}.  Such a transformation requires one-way communication, but this communication can always be included in the $r^{th}$ round message of the protocol.  The $r$-round achievability in the quantum case is equivalent to the classical protocol with Alice and Bob replacing their common information announcement with the corresponding two-outcome projective measurements. 

\subsection{Necessity}

Both transformations are clearly impossible when $\lambda'>\lambda$, which can be seen by appealing to monotonicity of LOCC/LOPC monotones.  When $\lambda'>\lambda$ the so-called entanglement of formation would need to been increased in the LOCC transformation (which is not possible \cite{Bennett-1996a}), and the analgous conditional mutual information $I(X:Y|Z)$ would need to be increased in the LOPC transformation (which is likewise not possible \cite{Csiszar-2011a}).  Henceforth, we restrict attention to the case that $\lambda\geq\lambda'$.  The proof is separated into quantum and classical parts.

\subsubsection{The Quantum Scenario}

\label{Sect:proof-quantum}

Let us begin by introducing some new notation based on the block diagrams of $\mbf{b}^{(r,\lambda)}$.  Let $\mc{B}^{(r)}$ be the set of events $(x,y,z)$ such that $\mbf{b}^{(r,\lambda)}_{xyz}>0$.  For every $k=1,\cdots,r-1$, there exists a disjoint partitioning of $\mc{B}^{(r)}$ into subsets $\mc{B}_{j_1,\cdots,j_k}^{(r-k)}$ such that $(x,y,z)\in\mc{B}_{j_1,\cdots,j_k}^{(r-k)}$ if $\mbf{b}^{(r,\lambda)}_{xyz|j_1,\cdots,j_k}>0$, where $j_i\in\{0,1\}$ is the value of Alice's (resp. Bob's) common information with Eve given all previous values $j_1,\cdots,j_{i-1}$ when $r-(i-1)$ is even (resp. odd).  In other words, the sets $\mc{B}_{j_1,\cdots,j_k}^{(r-k)}$ are the supports of the different sub-distributions $\mbf{b}^{(r-k,\lambda)}$ used to build $\mbf{b}^{(r,\lambda)}$ in the recursive construction.  With a slight abuse of terminology, we will write, for instance, $x\in\mc{B}_{j_1,\cdots,j_k}^{(r-k)}$ if there exists some $(y,z)$ such that $(x,y,z)\in\mc{B}_{j_1,\cdots,j_k}^{(r-k)}$.  Note that for every $z$ and fixed $k$, there exists one and only one set $\mc{B}_{j_1,\cdots,j_k}^{(r-k)}$ such that $z\in\mc{B}_{j_1,\cdots,j_k}^{(r-k)}$.

Recall that $\rho^{(r,\lambda)}_{\mbf{b}}$ has the decomposition $\left\{\frac{1}{2^{r+1}}, \ket{\psi_z^{(r,\lambda)}}\right\}$. We will drop the superscript in the state $\ket{\psi_z^{(r,\lambda)}}$ in the proof to ease notation. The deterministic transformation $\rho^{(r,\lambda)}_{\mbf{b}}\locc\op{\Phi_{\lambda'}}{\Phi_{\lambda'}}$ requires that the LOCC protocol transforms $\ket{\psi_z}\to\ket{\Phi_{\lambda'}}$ for every $\ket{\psi_z}$.  Since both the initial and final states have Schmidt rank two in the transformation $\ket{\psi_z}\to\ket{\Phi_{\lambda'}}$, every local operation must either eliminate the state $\ket{\psi_z}$ or preserve its Schmidt rank.  We will use this crucial fact to first argue that at the end of $r-1$ rounds on $\rho^{(r,\lambda)}_{\mbf{b}}$ there will always be some outcome branch for which four entangled states have not been eliminated, each of them having values associated with the same block $\mc{B}_{j_1,\cdots,j_{r-1}}^{(1)}$.  We then show that it is impossible for all of these four entangled states to be simultaneously transformed into $\ket{\Phi_{\lambda'}}$ using local operations and \textit{no} communication.  Therefore, there will always be some outcome branch in which an entangled pure state is not obtained.

We proceed with the following inductive argument whose validity when $k=0$ is trivial.

\noindent \textbf{Inductive assumption:}  Along some branch at the end of round $k$, there exists a set $\mc{B}_{j_1,\cdots,j_k}^{(r-k)}$ such that $\ket{\psi_z}$ has not been eliminated for all $z\in\mc{B}_{j_1,\cdots,j_k}^{(r-k)}$.  Let us denote the posterior states of $\ket{\psi_z}$ at this point in the protocol by $\ket{\psi_z'}$ (i.e. $\ket{\psi_z'}\propto \; A\otimes B\ket{\psi_z}$, where $A\otimes B$ is the full Kraus operator representing all local measurements performed up to this point of the protocol).  Thus, $\ket{\psi_z'}$ is a rank-two entangled state for all $z\in\mc{B}_{j_1,\cdots,j_k}^{(r-k)}$.  We denote $\ket{x'}\propto\;A\ket{x}$ and $\ket{y'}\propto\;B\ket{y}$ as the posterior states of $\ket{x}$ and $\ket{y}$ at this point in the protocol for all $(x,y)\in\mc{B}_{j_1,\cdots,j_k}^{(r-k)}$.  Note that the $\ket{x'}$ and $\ket{y'}$ span the local supports of the $\ket{\psi_z'}$.  

Without loss of generality, suppose that $r-k$ is even.  We consider first the case when Bob is the measuring party in round $k+1$.  Since $r-k$ is even, Bob and Eve do not share any common information in the distribution $\mbf{b}^{(r-k,\lambda)}$ (see Proposition \ref{Prop:DistProp}).  Since $Srk(\ket{\psi_z'})=Srk(\ket{\psi_z})$ for all $z\in\mc{B}_{j_1,\cdots,j_k}^{(r-k)}$, an application of Proposition \ref{Prop:QuantumCI} (see below) shows that for any $\ol{y},\ol{z}\in\mc{B}_{j_1,\cdots,j_k}^{(r-k)}$ there exists a sequence
\[\ol{y}z_1,y_1,z_2,y_2\cdots,z_ny_n\ol{z}\] 
with $y_i,z_i\in\mc{B}_{j_1,\cdots,j_k}^{(r-k)}$ such that
\begin{equation}
\label{Eq:CI-link}
\bra{\ol{y}'}\rho^{\prime B}_{z_1}\ket{\ol{y}'}\bra{y_1'}\rho^{\prime B}_{z_1}\ket{y_1'}\bra{y_1'}\rho^{\prime B}_{z_2}\ket{y_1'}\cdots\bra{y'_n}\rho^{\prime B}_{\ol{z}}\ket{y_n'}>0,
\end{equation}
where $\rho^{\prime B}_{z_i}=\tr_A\op{\psi_{z_i}'}{\psi_{z_i}'}$.  Here we are using the inductive assumption that every $\ket{\psi_z'}$ is a rank-two entangled state for $z\in\mc{B}_{j_1,\cdots,j_k}^{(r-k)}$; since if, say, $\ket{y_1}\in supp[\rho^B_{z_1}]$, then $\ket{y'_1}\in supp[\rho_{z_1}^{\prime B}]$ when $\ket{\psi'_z}$ is entangled.

Now let $\{B_\mu\}_{\mu}$ be Kraus operators characterizing Bob's local measurement in round $k+1$.  We argue that for every value of $\mu$, either $\mbb{I}\otimes B_\mu\ket{\psi_z'}=0$ for all $z\in\mc{B}_{j_1,\cdots,j_k}^{(r-k)}$, or $\mbb{I}\otimes B_\mu\ket{\psi_z'}\not=0$ for all $z\in\mc{B}_{j_1,\cdots,j_k}^{(r-k)}$.  Indeed, suppose there is some $z$ for which $\mbb{I}\otimes B_\mu\ket{\psi_{z}'}=0$.  This means $B_\mu\ket{\ol{y}'}=0$ for $\ket{\ol{y}'}$ in the support of $\rho_{z}^{\prime B}$.  Consider now any $\ol{z}\in \mc{B}_{j_1,\cdots,j_k}^{(r-k)}$.  There must exist some sequence $\ol{y}z_1,y_1,z_2,y_2\cdots,z_ny_n\ol{z}$ such that Eq. \eqref{Eq:CI-link} holds.  The fact that $\bra{\ol{y}'}\rho^{\prime B}_{z_1}\ket{\ol{y}'}>0$ (as seen from Eq. \eqref{Eq:CI-link}) implies that $B_\mu\rho^{\prime B}_{z_1} B_\mu^\dagger$ no longer has rank two since $B_\mu\ket{\ol{y}'}=0$.  However, every $\mbb{I}\otimes B_\mu\ket{\psi_z'}$ either must be eliminated or have rank two; hence $B_\mu\rho^{\prime B}_{z_1} B_\mu^\dagger$ must vanish.  From the second term in Eq. \eqref{Eq:CI-link}, we see that $\bra{y_1'}\rho^{\prime B}_{z_1}\ket{y_1'}>0$, which means that $B_\mu\ket{y_1'}=0$ in order for $B_\mu\rho^{\prime B}_{z_1} B_\mu^\dagger$ to vanish.  Continuing along the sequence of Eq. \eqref{Eq:CI-link} and repeating this argument, we will eventually reach $\rho_{\ol{z}}^{\prime B}$, which also must be eliminated by $B_\mu$.  Since $\ol{z}$ was arbitrary, we have established that $\mbb{I}\otimes B_\mu\ket{\psi_z'}=0$ for some $z\in\mc{B}_{j_1,\cdots,j_k}^{(r-k)}$ implies that $\mbb{I}\otimes B_\mu\ket{\psi_z'}=0$ for all $z\in\mc{B}_{j_1,\cdots,j_k}^{(r-k)}$.  Finally, since $\{B_\mu\}_{\mu}$ is a complete measurement, there must exist at least one outcome $\ol{\mu}$ such that $\mbb{I}\otimes B_{\ol{\mu}}\ket{\psi_{\ol{z}}'}\not=0$ for some $\ol{z}\in \mc{B}_{j_1,\cdots,j_k}^{(r-k)}$.  Let $j_{k+1}\in\{0,1\}$ be such that $\ol{z}\in\mc{B}_{j_1,\cdots,j_k,j_{k+1}}^{(r-k-1)}$.  Since $\mc{B}_{j_1,\cdots,j_k,j_{k+1}}^{(r-k-1)}$ is just a subset of $\mc{B}_{j_1,\cdots,j_k}^{(r-k)}$, we have that $\mbb{I}\otimes B_\mu\ket{\psi_z'}\not=0$ for all $z\in\mc{B}_{j_1,\cdots,j_k,j_{k+1}}^{(r-k-1)}$.  This verifies the inductive assumption when Bob is measuring in round $k+1$.

We now consider the case when Alice is the measuring party in round $k+1$ with $r-k$ being even as before.  In this case, Alice and Eve share one bit of common information in the distribution $\mbf{b}^{(r-k,\lambda)}$.  However, this information simply specifies whether a given $z\in\mc{B}_{j_1,\cdots,j_k}^{(r-k)}$ either belongs to $\mc{B}_{j_1,\cdots,j_k,0}^{(r-k-1)}$ or $\mc{B}_{j_1,\cdots,j_k,1}^{(r-k-1)}$.  Most importantly, Alice has no common information with Eve for values $(x,z)$ within either $\mc{B}_{j_1,\cdots,j_k,0}^{(r-k-1)}$ or $\mc{B}_{j_1,\cdots,j_k,1}^{(r-k-1)}$.  Hence by repeating the previous argument within those sub-blocks, we have that for her measurement $\{A_\mu\}_{\mu}$ either $A_\mu\otimes\mbb{I}\ket{\psi_z'}=0$ for all $z\in\mc{B}_{j_1,\cdots,j_k,0}^{(r-k-1)}$, or $A_\mu\otimes \mbb{I}\ket{\psi_z'}\not=0$ for all $z\in\mc{B}_{j_1,\cdots,j_k,0}^{(r-k-1)}$; and likewise $A_\mu\otimes\mbb{I}\ket{\psi_z'}=0$ for all $z\in\mc{B}_{j_1,\cdots,j_k,1}^{(r-k-1)}$, or $A_\mu\otimes \mbb{I}\ket{\psi_z'}\not=0$ for all $z\in\mc{B}_{j_1,\cdots,j_k,1}^{(r-k-1)}$.  There must be at least one outcome $\ol{\mu}$ with $A_{\ol{\mu}}\otimes\mbb{I}\ket{\psi_{\ol{z}}'}\not=0$ for some $\ol{z}\in\mc{B}_{j_1,\cdots,j_k}^{(r-k)}$.  Let $j_{k+1}\in\{0,1\}$ be such that $\ol{z}\in\mc{B}_{j_1,\cdots,j_k,j_{k+1}}^{(r-k-1)}$.  This verifies the inductive assumption for round $k+1$.

Having proven the inductive assertion, we now apply it to a hypothetical $(r-1)$-round LOCC protocol that transforming $\rho_{\mbf{b}}^{(r,\lambda)}$ into $\ket{\Phi_{\lambda'}}$.  At the end of this protocol, there must exist some set $\mc{B}_{j_1,\cdots,j_{r-1}}^{(1)}$ such that the posterior probability of the state $\ket{\psi_z}$ is nonzero for every $\mc{B}_{j_1,\cdots,j_{r-1}}^{(1)}$.   But since there are no more rounds left in the protocol, each of these states must be locally convertible into the target state $\ket{\Phi_{\lambda'}}$ with no further communication.  Proposition \ref{Prop:QuantumNoCC} below shows that this is not possible, and therefore the hypothetical $(r-1)$-round protocol performing the desired transformation does not exist.

The final part of the proof is to show that the transformation is impossible in $r$ rounds if Alice (resp. Bob) is the first to announce when $r$ is odd (resp. even).  The reasoning follows exactly along the lines of the proceeding argument.  Consider the case when Alice is announcing first and $r$ is odd.  In this case she initially shares no common information with Eve.  Therefore, her measurement is unable to eliminate any of the $\ket{\psi_z}$ and so at the end of round $1$, all values of $z$ still belong to $\mc{B}^{(r)}$.  We can then repeat the about argument except with the modified inductive assumption: Along some branch at the end of round $k$ there exists a set $\mc{B}^{(r-(k-1))}_{j_1,\cdots j_{k-1}}$ such that $\ket{\psi_z}$ has not been eliminated for all $z\in\mc{B}^{(r-(k-1))}_{j_1,\cdots j_{k-1}}$.  Proceeding for $r$ rounds again leaves Alice and Bob with at least one block $\mc{B}^{(1)}_{j_1,\cdots j_{r-1}}$ with no states eliminated.  From Proposition \ref{Prop:QuantumNoCC}, the transformation cannot be completed.

We now prove the two main propositions referenced in the above proof.  The first essentially says that when embedding a probability distribution into a tripartite quantum states, common information cannot be generated between Eve and any one of the other two parties, even with just a nonzero probability. 
\begin{proposition}
\label{Prop:QuantumCI}
Let $\{\ket{\psi_{z}}^{AB}\}_{\mc{Z}}$ be a collection of bipartite states for Alice and Bob's systems and $Z$ a random variable ranging over $\mc{Z}$ with distribution $p^Z$.  Suppose that $H(J_{XZ})=H(J_{YZ})=0$ for distributions
\begin{align}
p^{XZ}_{xz}&=\bra{x}(\tr_B\op{\psi_{z}}{\psi_{z}})\ket{x}&&\text{and}&p^{YZ}_{yz}&=\bra{y}(\tr_A\op{\psi_{z}}{\psi_{z}})\ket{y}.
\end{align}
Let $A$ and $B$ be any pair of operators such that $Srk(\ket{\psi_z'})=Srk(\ket{\psi_z})$ for all $z$, where $Srk(\cdot)$ is the Schmidt rank of the given state and $\ket{\psi_z'}=\frac{A\otimes B\ket{\psi_z}}{\sqrt{\bra{\psi_z}A^\dagger A\otimes B^\dagger B\ket{\psi_z}}}$.  Then for any $\ol{x},\ol{z}$ there exists a sequence $(x_i,z_i)_i$ such that
\begin{equation}
\label{Eq:probCIquantum1}
\bra{\ol{x}'}\rho^{\prime A}_{z_1}\ket{\ol{x}'}\bra{x_1'}\rho^{\prime A}_{z_1}\ket{x_1'}\bra{x'_1}\rho^{\prime A}_{z_2}\ket{x'_1}\cdots\bra{x'_n}\rho^{\prime A}_{\ol{z}}\ket{x'_n}>0,
\end{equation}
where $\rho^{\prime A}_i=\tr_{B}\op{\psi_z'}{\psi_z'}$ and $\ket{x'}=\frac{\ket{x}}{\sqrt{\bra{x}A^\dagger A\ket{x}}}$.  An analogous statement holds for any pair of values $\ol{y},\ol{z}$.
\end{proposition}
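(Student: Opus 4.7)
The plan is to lift the communicating sequence supplied by $H(J_{XZ})=0$ in the original distribution to an analogous sequence for the post-operation quantities, using Schmidt rank preservation as the bridge. By the G\'{a}cs-K\"{o}rner characterization in Eq.~\eqref{Eq:communicating}, for any $\ol{x},\ol{z}$ there is a sequence $\ol{x},z_1,x_1,z_2,\ldots,x_n,\ol{z}$ such that the original analog of the target product,
\[
\bra{\ol{x}}\rho^A_{z_1}\ket{\ol{x}}\,\bra{x_1}\rho^A_{z_1}\ket{x_1}\,\bra{x_1}\rho^A_{z_2}\ket{x_1}\,\cdots\,\bra{x_n}\rho^A_{\ol{z}}\ket{x_n},
\]
is strictly positive. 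In the context where this proposition is applied (the origami embedding underlying Theorem~\ref{thm:main}), $\rho^A_z$ is diagonal in the computational basis, so each basis vector $\ket{u}$ appearing in the sequence lies inside $supp[\rho^A_{z_i}]$ outright, rather than only being non-orthogonal to it.

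The main lemma I would establish is that the rank identity $Srk(\ket{\psi'_z})=Srk(\ket{\psi_z})$ forces $A$ to restrict to an injective map $supp[\rho^A_z]\to supp[\rho^{\prime A}_z]$ whose image is exactly $A\cdot supp[\rho^A_z]=supp[\rho^{\prime A}_z]$. Writing $\rho^{\prime A}_z \propto A M_z A^\dagger$ with $M_z=\tr_B[\op{\psi_z}{\psi_z}(I\otimes B^\dagger B)]$, the inclusion $supp[M_z]\subseteq supp[\rho^A_z]$ combined with the identity $\mathrm{rank}(A M_z A^\dagger)=\mathrm{rank}(\rho^A_z)$ (from the Schmidt-rank hypothesis) forces both this inclusion to be an equality and $A$ to act injectively on $supp[\rho^A_z]$. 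Consequently, every $\ket{u}\in supp[\rho^A_z]$ satisfies $A\ket{u}\in supp[\rho^{\prime A}_z]\setminus\{0\}$.

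Interpreting $\ket{x'}$ as the unit vector parallel to $A\ket{x}$, the factor $\bra{u'}\rho^{\prime A}_{z_i}\ket{u'}$ is proportional to $\bra{Au}\rho^{\prime A}_{z_i}\ket{Au}$, and this is positive precisely when $A\ket{u}$ is not orthogonal to $supp[\rho^{\prime A}_{z_i}]$. The lemma above delivers this for every $\ket{u}$ arising in the communicating sequence, so the entire chain of factors remains positive and the displayed product in Eq.~\eqref{Eq:probCIquantum1} is nonzero. The analogous statement for $(\ol{y},\ol{z})$ follows from swapping the roles of $A$ and $B$ together with the hypothesis $H(J_{YZ})=0$.

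The main obstacle is the rank-support correspondence inside the middle lemma: one has to verify that the global Schmidt rank preservation across all $z$ simultaneously precludes $A$ from either collapsing any direction inside $supp[\rho^A_z]$ or rotating it out of $A\cdot supp[\rho^A_z]$. Once that lemma is in hand, the G\'{a}cs-K\"{o}rner sequence transports link-by-link with no loss, and the conclusion follows immediately from multiplying the positive factors.
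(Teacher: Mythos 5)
Your proposal matches the paper's proof in its essentials: both lift the G\'{a}cs-K\"{o}rner communicating sequence through the local operation by showing that Schmidt-rank preservation forces $A$ to map $supp[\rho^A_z]$ injectively onto $supp[\rho^{\prime A}_z]$, so that every link $\bra{x'}\rho^{\prime A}_{z_i}\ket{x'}$ of the chain stays positive. Your added observation that $\rho^A_z$ is diagonal in the computational basis in the relevant application (so $p^{XZ}_{xz}>0$ genuinely places $\ket{x}$ inside $supp[\rho^A_z]$ rather than merely non-orthogonal to it) is a point the paper's proof asserts without comment, but otherwise the two arguments are the same.
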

\begin{proof}
By the assumption that $H(J_{XZ})=0$, we have that for any $\ol{x},\ol{z}$ there exists a sequence $(x_i,z_i)_i$ such that
\begin{equation}
p^{XZ}_{\ol{x},z_1}p^{XZ}_{x_1,z_1}p^{XZ}_{x_1,z_2}\cdots p^{XZ}_{x_n,z_n}p^{XZ}_{x_n,\ol{z}}>0.
\end{equation}
The essential observation is that if $Srk(A\otimes B\ket{\psi_z})=Srk(\ket{\psi_z})$, then
\begin{equation}
\bra{x}(\tr_B\op{\psi_{z}}{\psi_{z}})\ket{x}>0\qquad\Rightarrow\qquad \bra{x}A^\dagger (\tr_B A\otimes B\op{\psi_{z}}{\psi_{z}}A^\dagger \otimes B^\dagger )A\ket{x}>0.
\end{equation}
Indeed, the first inequality says that (i) $\ket{x}\in supp \left(\tr_B\op{\psi_{z}}{\psi_{z}}\right)$.  Since $A\otimes B$ does not decrease the rank of $\ket{\psi_z}$, this means that (ii) $supp \left(\tr_B\op{\psi_{z}}{\psi_{z}}\right)=supp \left(\tr_B\mbb{I}\otimes B\op{\psi_{z}}{\psi_{z}}\mbb{I}\otimes B^\dagger\right)$, and (iii) $A\ket{x}\not=0$.  Combining facts (i)--(iii) gives that $A\ket{x}\in supp\left(\tr_B A\otimes B\op{\psi_{z}}{\psi_{z}}A^\dagger \otimes B^\dagger \right)$.  By interchanging $x$ and $y$ in this argument, an analogous inequality to \eqref{Eq:probCIquantum1} is proven for any pair $\ol{y},\ol{z}$.
\end{proof}

The second proposition provides the final ingredient in the above proof.  Up to relabeling, the events in $\mc{B}_{j_1,\cdots,j_{r-1}}^{(1)}$ corresponds to the events in the support of $\mbf{b}^{(1,\lambda)}$.  For reference, we reproduce the diagram depicting $\mbf{b}^{(1,\lambda)}$:
\begin{align}
\mbf{b}^{(1,\lambda)}=\begin{BMAT}{ccc}{cc}
{}&\qquad x&{}\\
 \begin{BMAT}{r}{cc}
 \vphantom{\text{\Huge Y}}\\y
 \end{BMAT}&\begin{BMAT}{c|cccc}{c|cccc}
{}&0&1&2&3\\
0&0&\cdot &\cdot&1\\
1&\cdot&0&1&\cdot\\
2&2&3&\cdot&\cdot\\
3&\cdot&\cdot&3&2
\end{BMAT}
& \begin{BMAT}{r}{cccc}
\vphantom{\text{\Huge z}}\\ \vphantom{\text{\Huge z}}\\\vphantom{\text{\Huge z}}\\z
 \end{BMAT}
\end{BMAT}
\end{align}
\begin{proposition}
\label{Prop:QuantumNoCC}
Let $\{\ket{\psi_z}\}_{z=0}^3$ be the four entangled states obtained by the embedding of $\mbf{b}^{(1,\lambda)}$.  Let $\{\ket{\psi_z'}\}_{z=0}^3$ be the resulting four states at the end of one branch in an LOCC protocol, all of them being rank-two entangled.  Then it is not possible to transform each of the states into $\ket{\Phi_{\lambda'}}$ using just local operations with no further communication.
\end{proposition}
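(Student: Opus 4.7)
My plan is to argue by contradiction: assume there exist CPTP maps $\mc{E}_A,\mc{E}_B$ with Kraus operators $\{A_a\},\{B_b\}$ such that $(\mc{E}_A\otimes\mc{E}_B)(\op{\psi_z'}{\psi_z'})=\op{\Phi_{\lambda'}}{\Phi_{\lambda'}}$ for every $z\in\{0,1,2,3\}$, and derive a contradiction from the cyclic overlap structure of the four local supports. Since the output is pure and of Schmidt rank two, for every Kraus pair $(a,b)$ and every $z$ the vector $A_a\otimes B_b\ket{\psi_z'}$ must either vanish or be a scalar multiple of $\ket{\Phi_{\lambda'}}$. When nonzero, both $A_a$ and $B_b$ must be injective on the respective local supports $V_z^A:=\mathrm{span}\{\ket{x'}:(x,y,z)\in\mc{B}^{(1)}\}$ and $V_z^B$, with images equal to the $2$-dimensional Schmidt supports $\mc{K}_A,\mc{K}_B$ of $\ket{\Phi_{\lambda'}}$; a short trace-preservation argument rules out the intermediate ``rank $1$'' possibility.

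Next I analyze how these constraints interact with the cyclic overlap pattern on Alice's side read off from $\mbf{b}^{(1,\lambda)}$:
\[
V_0^A=\mathrm{span}\{\ket{0'},\ket{1'}\},\; V_3^A=\mathrm{span}\{\ket{1'},\ket{2'}\},\; V_1^A=\mathrm{span}\{\ket{2'},\ket{3'}\},\; V_2^A=\mathrm{span}\{\ket{3'},\ket{0'}\},
\]
in which consecutive supports share exactly one basis vector. If $A_a$ were injective on some $V_z^A$ but killed another $V_{z'}^A$, then walking around the cycle via shared basis vectors forces some $\ket{x'}$ to be simultaneously nonzero and in $\ker A_a$. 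Hence every Kraus $A_a$ either kills all four $V_z^A$ or is injective on every one of them, with image $\mc{K}_A$ in the latter case. Because $\ket{\psi_0'}\to\ket{\Phi_{\lambda'}}$ occurs with probability one, some branch $(a,b)$ has $A_a\otimes B_b\ket{\psi_0'}\ne 0$; by the preceding step, this $A_a$ is injective on all four $V_z^A$, and $B_b$ is injective on $V_0^B$. The crucial feature $V_0^B=V_1^B$ of $\mbf{b}^{(1,\lambda)}$ then guarantees that the same branch also produces a nonzero rank-two output on $\ket{\psi_1'}$ that must likewise be proportional to $\ket{\Phi_{\lambda'}}$.

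Now for the contradiction. Fix an orthonormal Schmidt basis for $\ket{\Phi_{\lambda'}}$ so that it is represented by the invertible diagonal matrix $D=\mathrm{diag}(\sqrt{\lambda'},\sqrt{1-\lambda'})$. Setting $U_i:=A_a\ket{i'}\in\mc{K}_A\cong\mathbb{C}^2$ and $V_j:=B_b\ket{j'}\in\mc{K}_B\cong\mathbb{C}^2$, and letting $c_{xy}\ne 0$ denote the nonzero coefficients of $\ket{\psi_z'}$ in the $\{\ket{x'}\ket{y'}\}$ basis, the two output conditions become the $2\times 2$ matrix equations
\begin{align*}
c_{00}U_0 V_0^T + c_{11}U_1 V_1^T &= k_0\,D,\\
c_{30}U_3 V_0^T + c_{21}U_2 V_1^T &= k_1\,D,
\end{align*}
for some nonzero scalars $k_0,k_1$. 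Since $V_0,V_1$ are linearly independent (from $B_b$ being injective on $V_0^B$) and $D$ is invertible, each equation is uniquely solved as $[c_{00}U_0\mid c_{11}U_1]=k_0\,D[V_0\mid V_1]^{-T}$ and $[c_{30}U_3\mid c_{21}U_2]=k_1\,D[V_0\mid V_1]^{-T}$. Comparing first columns yields $c_{00}U_0\parallel c_{30}U_3$, hence $U_0\parallel U_3$. But the previous paragraph asserts that $A_a$ is injective on $V_2^A=\mathrm{span}\{\ket{0'},\ket{3'}\}$, which demands $U_0$ and $U_3$ be linearly independent---a contradiction.

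The main obstacle is the second paragraph: promoting the local ``rank $0$ or rank $2$'' dichotomy into a global ``all-or-nothing'' statement on Alice's side, and checking that any useful $A_a$ has image exactly $\mc{K}_A$ on every $V_z^A$ (rather than on just the $V_z^A$'s appearing in some successful branch). Once this structural information is in place, the final contradiction reduces to the short $2\times 2$ linear-algebra computation above, which critically exploits the identity $V_0^B=V_1^B$ built into the origami construction.
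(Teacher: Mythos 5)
Your overall strategy is the same as the paper's: use the purity and Schmidt-rank-two structure of the target to force an ``eliminate or preserve rank'' dichotomy on each Kraus branch, exploit the cyclic overlap pattern of the local supports of the four blocks, and finish with a linear-algebra contradiction from the requirement that several surviving outputs be proportional to the same state $\ket{\Phi_{\lambda'}}$. Your closing $2\times 2$ matrix computation (using only $\psi_0',\psi_1'$ together with injectivity on $V_2^A$) is a legitimate and slightly more compact finish than the paper's, which instead chains proportionality through all four states to reach an ``anti-diagonal versus diagonal'' contradiction. However, the structural lemma you lean on in the second paragraph has a genuine gap, and it is load-bearing: both the claim that the successful branch on $\psi_0'$ also succeeds on $\psi_1'$ and the injectivity of $A_a$ on $V_2^A$ depend on it.

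The gap is this: walking around the cycle $V_0^A,V_3^A,V_1^A,V_2^A$ does \emph{not} force some $\ket{x'}$ to be ``simultaneously nonzero and in $\ker A_a$.'' If $A_a$ is injective on $V_0^A=\mathrm{span}\{\ket{0'},\ket{1'}\}$ and annihilates $V_1^A=\mathrm{span}\{\ket{2'},\ket{3'}\}$ (two blocks that share no basis vector), the walk only exhibits an intermediate block, e.g.\ $V_3^A=\mathrm{span}\{\ket{1'},\ket{2'}\}$, on which $A_a$ has rank exactly one --- and a Kraus operator with a nontrivial kernel is not by itself contradictory. Ruling out this rank-one intermediate requires crossing over to Bob's side: if $A_a$ has rank one on $V_z^A$, then $A_a\otimes B_b\ket{\psi_z'}$ has Schmidt rank at most one for \emph{every} $b$, hence must vanish for every $b$ (nonzero outputs must be proportional to the rank-two $\ket{\Phi_{\lambda'}}$); writing out $\ket{\psi_z'}$ this forces every $B_b$ to annihilate one and the same nonzero vector in $V_z^B$, contradicting $\sum_b B_b^\dagger B_b=\mbb{I}$. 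The same cross-party argument is also needed to upgrade ``$A_a\ket{x'}\neq 0$ for each $x$'' to genuine injectivity on each two-dimensional span (the images could a priori be proportional), which your walk over individual basis vectors does not address. This is precisely the ingredient the paper uses when it argues that $A_0$ ``cannot eliminate either $\ket{3'}^A$ or $\ket{2'}^A$, or else the ranks of $\ket{\psi_2'}$ and $\ket{\psi_3'}$ will drop to one''; once you add it, your proof closes.
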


\begin{proof}

Let $A\otimes B$ be the measurement operators corresponding to this branch in the protocol.  In other words $\ket{\psi_z'}\propto \;A\otimes B\ket{\psi_z}$, and we can write
\begin{align}
\ket{\psi'_0}&\propto\left(\ket{0'}^{A}\ket{0'}^B+\ket{1'}^A\ket{1'}^B\right)\label{Eq:B11}\\
\ket{\psi'_1}&\propto\left(\ket{2'}^{A}\ket{0'}^B+\ket{3'}^A\ket{1'}^B\right)\label{Eq:B12}\\
\ket{\psi'_2}&\propto\left(\ket{0'}^{A}\ket{2'}^B+\ket{3'}^A\ket{3'}^B\right)\label{Eq:B13}\\
\ket{\psi'_3}&\propto\left(\ket{1'}^{A}\ket{2'}^B+\ket{2'}^A\ket{3'}^B\right)\label{Eq:B14}
\end{align}
where $\ket{i'}^A\propto\;A\ket{i}^A$ and  $\ket{j'}^B\propto\;B\ket{j}^B$ for all $j$.  The local operation will consist in each party performing a local measurement and unitary rotation.  Since no communication is allowed, the target state $\ket{\Phi_{\lambda'}}$ must be obtained after every possible combination of outcomes.  Let $B_0$ be one of Bob's measurement operators that doesn't eliminate either $\ket{\psi_0'}^B$ or $\ket{\psi_1'}^B$, and let $B_1$ be one of Bob's measurement operators that doesn't eliminate either $\ket{\psi_2'}^B$ or $\ket{\psi_3'}^B$ (such operators must exist).  Its clear that by applying $B_0$ to the first pair and $B_1$ to the second pair, the form of Eqns. \eqref{Eq:B11}--\eqref{Eq:B14} will stay the same.  Likewise, there must exist some operator $A_0$ of Alice that does not eliminate either $\ket{0'}^A$ or $\ket{1'}^A$.  Additionally, this operator cannot eliminate either $\ket{3'}^A$ or $\ket{2'}^A$, or else the ranks of $\ket{\psi_2'}$ and $\ket{\psi_3'}$ will drop to one respectively.  Thus, applying operator $A_0$ to the states will not change their form either.

Without loss of generality then, we can assume that each of the $\ket{\psi_z'}$ are nonzero, proportional to $\ket{\Phi_{\lambda'}}$, and therefore proportional to each other.  The linear independence of $\ket{0'}^B$ and $\ket{1'}^B$ in  Eqns. \eqref{Eq:B11} and \eqref{Eq:B12} implies that $\ket{2'}^A\propto\ket{0'}^A$ and $\ket{3'}^A\propto\ket{1'}^A$.  Eq. \eqref{Eq:B11} and \eqref{Eq:B12} then give $\ket{2'}^B\propto\ket{0'}^B$ and $\ket{3'}^B\propto\ket{1'}^B$.  However, this forces Eq. \eqref{Eq:B14} to have the form $\alpha\ket{1'}^A\ket{0'}^B+\beta\ket{0'}^A\ket{1'}^B$ which contradicts the fact that it is proportional to 
$\ket{0'}^{A}\ket{0'}^B+\ket{1'}^A\ket{1'}^B$.

\end{proof}

\subsubsection{The Classical Scenario}

In this section we will prove that the corresponding LOPC transformation $\mbf{b}^{(r,\lambda)}\lopc\Phi_{\lambda'}$ is not possible in $r-1$ rounds.   For notational clarity, we let $X_nY_nZ_n$ denote random variables that are jointly distributed according to $\mbf{b}^{(n,\lambda)}$.  Our goal here is to reproduce the proof of Section. \ref{Sect:proof-quantum} in terms of an LOPC transformation of $\mbf{b}^{(r,\lambda)}$.  In the proof above, we made heavy use of the Schmidt rank of a bipartite pure state.  What is the analog of the Schmidt rank for classical distributions?  We propose one such quantity in Section \ref{Sect:secrcyRank}, which we call the \textit{secrecy rank}.  

For the simple structure of the origami distributions, the secrecy rank of the conditional distribution $p^{X_nY_n|Z_n=z}$ is equivalent to the number of events having nonzero probability given $Z_n=z$.  Since $\mbf{b}^{(r,\lambda)}\lopc\Phi_{\lambda'}$ is a deterministic transformation, then for every $z$, the conditional distribution $p^{X_rY_r|Z_r=z}$ must be transformed into $\Phi_{\lambda'}$ with probability one.  Therefore, with both $p^{X_rY_r|Z_r=z}$ and $\Phi_{\lambda'}$ having secrecy rank two, monotonicity of the secrecy rank (Theorem \ref{thm:rankmonotone} below) implies that every local operation must either ``eliminate'' the distribution $p^{X_rY_r|Z_r=z}$ or preserve its secrecy rank.  In other words, for every sequence of messages $m_{\leq k}=(m_1,\cdots,m_k)$ either $p_{z|m_{\leq k}}=0$ or both events having nonzero events in the original distribution $p^{X_rY_r|Z_r=z}$ still have nonzero posterior probability when given messages $m_{\leq k}$.  This is analogous to the Schmidt rank condition we have in the quantum case.

With this connection of rank preservation established, we can now run the exact same inductive as in the quantum proof of Section \ref{Sect:proof-quantum}. Hence, we can conclude that in any LOPC protocol transforming $\mbf{b}^{(r,\lambda)}\lopc\Phi_{\lambda'}$: Along some branch at the end of every round $k$, there exists a set $\mc{B}_{j_1,\cdots,j_k}^{(r-k)}$ such that $p^{X_rY_r|Z_r=z}$ has not been eliminated for all $z\in\mc{B}_{j_1,\cdots,j_k}^{(r-k)}$.  Consequently, after $r-1$ rounds, there will be some set of events $\mc{B}_{j_1,\cdots,j_{r-1}}^{(1)}$ corresponding to the block structure of $\mbf{b}^{(1,\lambda)}$, except with the events having possibly different nonzero posterior probabilities than at the start of the protocol.  The crucial point, however, is that no events in this set have been eliminated at the end of the $r-1$ rounds.  It is very easy to see that Alice and Bob then share no common information along this branch of the protocol, and therefore the only perfectly correlated variable that they can agree on is a trivial one.  Indeed, recall that non-trivial common information exists if the events with nonzero probability form disjoint blocks in the distribution; from Eq. \eqref{Eq:B1} we see this is not possible when all events have a nonzero probability.  Thus, we have proven that $r-1$ rounds of LOPC are not sufficient to complete the transformation $\rho^{(r,\lambda)}_{\mbf{b}}\locc\op{\Phi_{\lambda'}}{\Phi_{\lambda'}}$.  When Alice (resp. Bob) is the first to announce for $r$ is odd (resp. even), impossibility of the transformation in $r$ rounds can be argued just as in the quantum case above.

$\qed$

%\begin{proposition}[\cite{Ozols-2014a}]
%If $p^{XYZM_1\cdots M_k}$ is generated by in $k$ rounds of LOPC, then $\rho^{ABM_1\cdots M_k}=\tr_Z%\op{\Psi}{\Psi}^{ABZM_1\cdots M_k}$ can be generated by $k$ rounds of LOCC, where
%\begin{equation}
%\ket{\Psi}^{ABZM_1\cdots M_k}=\sum_{x,y,z,m_1,\cdots,m_k}\sqrt{p^{XYZM_1\cdots M_k}_{xyzm_1\cdots m_k}}\ket{xyzm_1\cdots m_k}.
%\end{equation}
%\end{proposition}

\subsection{A Classical Analog to the Schmidt rank}

\label{Sect:secrcyRank}

In this section we propose an LOPC analog to the quantum Schmidt rank which we will call the secrecy rank.  The construction is based on the so-called \textit{secret key cost} of a tripartite distribution \cite{Renner-2003a}, a quantity whose single-letter characterization \cite{Winter-2005a, Chitambar-2014e} has close connections to Wyner's classic notion of common information \cite{Wyner-1975a}.  A detailed exploration of the relationship between all these quantities is beyond the scope of this paper and will be saved for future work.

In what follows, for a distribution $p^{W}$ over the set $\mc{W}$, we let $|p^W|$ denote the number of events in   $\mc{W}$ with a nonzero probability.
\begin{definition}
The \textbf{secrecy rank} of tripartite distribution $p^{XYZ}$ is defined as
\begin{equation}
\label{Eq:DefnSecrecyRank}
Srk[p^{XYZ}]=\min_{X-ZW-Y}\max_z|p^{W|Z=z}|,
\end{equation}
where the minimization is taken over all auxiliary random variables $W$ such that $I(X:Y|ZW)=0$.  
\end{definition}

Let us describe how this quantity is analogous to the quantum Schmidt rank.  First consider the case when $Z$ is trivial; i.e. $|p^Z|=1$.  The Schmidt decomposition of a bipartite pure state has the form $\ket{\varphi}^{AB}=\sum_{w=1}^{Srk(\ket{\varphi})}\sqrt{p_w}\ket{\alpha_w}^A\ket{\beta_w}^B$ where the $\{\ket{\alpha_w}^A\}$ and $\{\ket{\beta_w}^B\}$ form orthonormal bases for Alice and Bob's systems respectively.  Suppose that Alice and Bob both measure $\ket{\varphi}^{AB}$ by projecting in their Schmidt basis.  If $X$ (resp. $Y$) is the random variable describing Alice's (resp. Bob's) outcomes, then their measurement statistics can be described as the marginal of the tripartite distribution $p^{XYW}$ where $p^{XYW}_{xyw}=\delta_{xw}\delta_{yw}p^W_w$; i.e. $X-W-Y$.  Clearly $Srk(\ket{\varphi})=Srk(p^{XY})$.  In the case that $Z$ is not trivial, the definition of Eq. \eqref{Eq:DefnSecrecyRank} most closely resembles the definition of Schmidt rank for bipartite mixed states, as proposed in Ref. \cite{Terhal-2000a}.  Namely, for a density matrix $\rho^{AB}$, one minimizes the quantity $Srk(\mathfrak{E})$ over all pure-state ensembles $\mathfrak{E}=\{\ket{\psi_i}^{AB},q_i\}$ generating $\rho^{AB}$, where $Srk(\mathfrak{E})$ is the maximum Schmidt rank of all the states in $\mathfrak{E}$.  For classical distributions $p^{XYZ}$, one can think of $p^{XYZ}$ as defining an ensemble of bipartite classical states $\mathfrak{E}=\{p^{XY|Z=z}, p^Z_z\}$.  There is no minimization over ensembles as in the quantum case, and therefore one obtains the secrecy rank of $p^{XYZ}$ by just taking the maximum secrecy rank of all the states in $\mathfrak{E}$.  This is precisely what Eq. \eqref{Eq:DefnSecrecyRank} gives.

We now prove a crucial operational property of the secrecy rank.  If one of the parties, say Alice, locally generates a message $M$, then the resulting distribution is $p^{XYZM}$ with secrecy rank given by 
\begin{align}
Srk[p^{XY(ZM)}]=\min_{XM-ZMW-YM}\max_{z,m}|p^{W|Z=z,M=m}|.
\end{align}

For distribution $p^{XYZ}$, let $W_0$ be any variables such that $Srk[p^{XYZ}]=\max_{z}|p^{W_0|Z=z}|$, and suppose that Alice generates a public message $M$.  This implies that $MX-W_0Z-Y\Rightarrow X-MW_0Z-Y\Rightarrow MX-MW_0Z-MY$.  Hence, the secrecy rank of the distribution after Alice's message satisfies
\begin{align}
\label{Eq:SrkMonotone}
Srk[p^{XY(ZM)}]&=\min_{XM-ZMW-YM}\max_{z,m}|p^{W|Z=z,M=m}|\notag\\
&=\min_{X-ZMW-YM}\max_{z,m}|p^{W|Z=z,M=m}|\notag\\
&\leq \max_{z,m}|p^{W_0|Z=z,M=m}|\notag\\
&\leq \max_{z}|p^{W_0|Z=z}|=Srk[p^{XYZ}].
\end{align}
This chain of inequalities shows that the Schmidt rank is an LOPC monotone.  However, an even stronger statement can be made since $\max_{z,m}|p^{W|Z=z,M=m}|\geq \max_{z}|p^{W|Z=z,M=m'}|$ for any fixed message $m'$.  Hence from Eq. \eqref{Eq:SrkMonotone} we can conclude that
\begin{equation}
Srk[p^{XYZ}]\geq Srk[p^{XY(ZM)}|M=m']\quad\forall m'\in\mc{M}.
\end{equation}
This shows that the secrecy rank cannot be increased even when conditioned on just a single message; in other words, with zero probability can the secrecy rank be increased by LOPC.  Quantities having this property are known as \textit{stochastic} LOPC (SLOPC) monotones.  This result is completely analogous to the quantum setting in which the Schmidt rank is known to be a stochastic LOCC (SLOCC) monotone.  In summary,
\begin{theorem}
\label{thm:rankmonotone}
The secrecy rank is a \textit{stochastic} LOPC (SLOPC) monotone. 
\end{theorem}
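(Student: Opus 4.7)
The plan is to establish monotonicity by unpacking the definition of secrecy rank, then upgrading the ordinary monotonicity to the stochastic version by noting that the maximum over $(z,m)$ dominates the maximum over $z$ alone for any fixed $m$. The proof is a direct calculation driven by how the Markov property $X - ZW - Y$ transforms when one party publicly broadcasts a message.

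First I would set up the one-step case: assume Alice is the announcing party producing a message $M$ (Bob's case is symmetric) and let $W_0$ be an optimal auxiliary for $p^{XYZ}$, so that $X - ZW_0 - Y$ holds and $\max_z |p^{W_0|Z=z}| = Srk[p^{XYZ}]$. Since $M$ is a stochastic function of Alice's data (together with public history that is already subsumed into $Z$), the Markov chain extends to $MX - ZMW_0 - YM$, and the trivial weakening $X - ZMW_0 - YM$ shows that $ZMW_0$ is a feasible choice of auxiliary for the post-message distribution $p^{XY(ZM)}$, with Eve's variable now enlarged to $ZM$.

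Next I would compare block sizes. For every fixed $m$, conditioning on $M=m$ only restricts the support over $W_0$ relative to the unconditional $p^{W_0|Z=z}$, so $|p^{W_0|Z=z,M=m}| \le |p^{W_0|Z=z}|$. Taking the maximum over $(z,m)$ preserves this bound and, combined with the feasibility of $ZMW_0$ in the new minimization, yields $Srk[p^{XY(ZM)}] \le Srk[p^{XYZ}]$. To promote this to an SLOPC statement, I would observe that for any fixed $m'$ and any admissible $W$ for $p^{XY(ZM)}$,
\begin{equation}
\max_{z,m}|p^{W|Z=z,M=m}| \;\ge\; \max_{z}|p^{W|Z=z,M=m'}|,
\end{equation}
so the same chain of inequalities shows $Srk[p^{XY(ZM)}|M=m'] \le Srk[p^{XYZ}]$ for each single outcome $m'$. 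Iterating over rounds of public communication then gives the full SLOPC monotonicity.

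The only subtle step is verifying that the Markov property survives the passage from $X - ZW_0 - Y$ to $X - ZMW_0 - YM$ when Alice generates $M$. This needs the message to depend only on Alice's side, so in a multi-round protocol one must inductively absorb prior messages into the Eve variable $Z$ before invoking the argument; once that bookkeeping is in place, the rest of the proof is the short inequality chain already displayed in the paragraph preceding the theorem, so no further technical obstacle arises.
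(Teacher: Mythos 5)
Your proposal is correct and follows essentially the same route as the paper: take an optimal auxiliary $W_0$ for $p^{XYZ}$, extend the Markov chain to $MX-ZMW_0-YM$ so that $W_0$ remains feasible for the post-message distribution, bound $\max_{z,m}|p^{W_0|Z=z,M=m}|\leq\max_z|p^{W_0|Z=z}|$, and then upgrade to the stochastic statement by noting that the maximum over $(z,m)$ dominates the maximum over $z$ for any fixed $m'$. Your explicit remark about absorbing prior public messages into Eve's variable before iterating is a point the paper leaves implicit, but it introduces no new idea beyond the paper's argument.
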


\section{Discussion and Conclusion}

\label{Sect:Conclusion}

In this paper we have investigated the question of round complexity in the local transformation of classical and quantum states.  We have shown that no universal upper bound exists on the minimum number of rounds needed to perform a a general bipartite entanglement transformation as well as a bipartite extraction of secret shared randomness from unsecure correlations.  We close this paper with some additional observations and open questions.

First, it should be emphasized that the LOCC impossibiltiy result of Theorem \ref{thm:main} holds for \textit{any} $0<\lambda'\leq\lambda$.  In particular, the target state $\ket{\Phi_{\lambda'}}$ can be entangled by an arbitrarily small amount and the transformation still requires $r$ rounds.  This demonstrates a type of discontinuity in the trade-off between entanglement and LOCC round number since $\ket{00}=\lim_{\lambda'\to 0}\ket{\Phi_{\lambda'}}$ can be trivially obtained in zero rounds of LOCC.  Such a phenomenon is reminiscent of the entanglement/round number trade-off demonstrated in Ref. \cite{Chitambar-2011a}.

It is also noteworthy that the classical notion of common information played an essential role in our line of argumentation.  Being able to unify the classical and quantum problems in this manner required the origami distributions to have special structure.  Ozols \textit{et al.} have previously used distributions of this sort to relate the tasks of classical and quantum key distillation \cite{Ozols-2014a}, and it appears that distributions with this structure provide a useful starting point for investigating the similarities and differences between quantum entanglement and classical secrecy theories. 

In addition, we remark that that the bipartite quantum states $\rho^{(r,\lambda)}_\mbf{b}$ constructed in this paper exhibit entanglement reversibility in the asymptotic sense \cite{Horodecki-1998b, Vollbrecht-2004a, Cornelio-2011a, Horodecki-2009a}.  That is, the entanglement cost of generating $\rho^{(r,\lambda)}_{\mbf{b}}$ by LOCC is equal to the amount of entanglement that can be distilled from $\rho^{(r,\lambda)}_{\mbf{b}}$, which is $h(\lambda)$.  A very simple protocol for generating $\rho^{(r,\lambda)}_{\mbf{b}}$ at entanglement rate $h(\lambda)$ involves Alice and Bob converting $Nh(\lambda)$ copies of $\ket{\Phi_{1/2}}$ into $N$ copies of $\ket{\Phi_\lambda}$.  On each of these copies Alice and Bob then choose a random joint permutation consistent with the block structure of $\mbf{b}^{(r,\lambda)}$: $\ket{\Phi_\lambda}\to\ket{\psi_z^{(r,\lambda)}}$.  Averaging over these permutation generates the state $\rho^{(r,\lambda)}_{\mbf{b}}$.  Our results show that the general structure of states possessing entanglement reversibility can be highly complex.  It is an interesting question of whether the entanglement distillation rate of $h(\lambda)$ can still be achieved for $\rho^{(r,\lambda)}_{\mbf{b}}$ in the asymptotic sense using fewer than $r$ rounds of LOCC.  We strongly conjecture that this is not possible, but we offer no definitive proof.

Another natural question to consider is the greatest success probability for achieving the transformations $\rho_\mbf{b}^{(r,\lambda)}\to\ket{\Phi_\lambda}$ and $\mbf{b}^{(r,\lambda)}\to\Phi_\lambda$ using $r-1$ rounds of LOCC and LOPC respectively.  The structure of the $\mbf{b}^{(r,\lambda)}$ suggests that in both cases the success probability of any $(r-1)$-round protocol should be no greater than $1/2$.  In fact, it is not difficult to construct an $(r-1)$-round protocol that exactly attains the success probability $1/2$.  We can prove that this indeed is optimal for the classical case, but we are no longer able to easily map this bound to the quantum setting like we have done in this paper.  The main reason is that monotonicity of the Schmidt/secrecy rank is no longer required in the transformation.  Therefore, the unified analysis of the quantum and classical scenarios pursued in this paper no longer holds.  We suspect that an LOCC/LOPC equivalence can still be established by using tools other than the Schmidt/secrecy rank.  This is left for future work.

\bigskip 

\noindent\textit{\textbf{Acknowledgments}}

We thank Benjamin Fortescue for fruitful discussions on multi-round key distillation.  EC is supported by the National Science Foundation (NSF) Early CAREER Award No. 1352326. MH is supported by an ARC Future Fellowship under Grant FT140100574.

\bibliographystyle{alphaurl}
\bibliography{QuantumBib}

\newcommand{\etalchar}[1]{$^{#1}$}
\begin{thebibliography}{BDSW96}

\bibitem[AC93]{Ahlswede-1993a}
R.~Ahlswede and I.~Csisz\'ar.
\newblock Common randomness in information theory and cryptography. i. secret
  sharing.
\newblock {\em Information Theory, IEEE Transactions on}, 39(4):1121--1132,
  1993.
\newblock \href {http://dx.doi.org/10.1109/18.243431}
  {\path{doi:10.1109/18.243431}}.

\bibitem[AG05]{Acin-2005b}
Antonio Ac\'\i{}n and Nicolas Gisin.
\newblock Quantum correlations and secret bits.
\newblock {\em Phys. Rev. Lett.}, 94:020501, Jan 2005.
\newblock \href {http://dx.doi.org/10.1103/PhysRevLett.94.020501}
  {\path{doi:10.1103/PhysRevLett.94.020501}}.

\bibitem[AMG03]{Acin-2003b}
Antonio Ac\'\i{}n, Lluis Masanes, and Nicolas Gisin.
\newblock Equivalence between two-qubit entanglement and secure key
  distribution.
\newblock {\em Phys. Rev. Lett.}, 91:167901, Oct 2003.
\newblock \href {http://dx.doi.org/10.1103/PhysRevLett.91.167901}
  {\path{doi:10.1103/PhysRevLett.91.167901}}.

\bibitem[BBR86]{Bennett-1986a}
C.H. Bennett, G.~Brassard, and J-M. Robert.
\newblock {\em How to Reduce your Enemy's Information}, pages 468--476.
\newblock Springer Berlin Heidelberg, Berlin, Heidelberg, 1986.
\newblock \href {http://dx.doi.org/10.1007/3-540-39799-X_37}
  {\path{doi:10.1007/3-540-39799-X_37}}.

\bibitem[BCA09]{Bae-2009a}
Joonwoo Bae, Toby Cubitt, and Antonio Ac\'\i{}n.
\newblock Nonsecret correlations can be used to distribute secrecy.
\newblock {\em Phys. Rev. A}, 79:032304, 2009.
\newblock \href {http://dx.doi.org/10.1103/PhysRevA.79.032304}
  {\path{doi:10.1103/PhysRevA.79.032304}}.

\bibitem[BDSW96]{Bennett-1996a}
Charles~H. Bennett, David~P. DiVincenzo, John~A. Smolin, and William~K.
  Wootters.
\newblock Mixed-state entanglement and quantum error correction.
\newblock {\em Phys. Rev. A}, 54(5):3824--3851, Nov 1996.
\newblock \href {http://dx.doi.org/10.1103/PhysRevA.54.3824}
  {\path{doi:10.1103/PhysRevA.54.3824}}.

\bibitem[BMR90]{Beaver-1990a}
D.~Beaver, S.~Micali, and P.~Rogaway.
\newblock The round complexity of secure protocols.
\newblock In {\em Proceedings of the Twenty-second Annual ACM Symposium on
  Theory of Computing}, STOC '90, pages 503--513, New York, NY, USA, 1990. ACM.
\newblock \href {http://dx.doi.org/10.1145/100216.100287}
  {\path{doi:10.1145/100216.100287}}.

\bibitem[CdOF11]{Cornelio-2011a}
Marcio~F. Cornelio, Marcos~C. de~Oliveira, and Felipe~F. Fanchini.
\newblock Entanglement {I}rreversibility from {Q}uantum {D}iscord and {Q}uantum
  {D}eficit.
\newblock {\em Phys. Rev. Lett.}, 107:020502, 2011.

\bibitem[CEH{\etalchar{+}}07]{Christandl-2007a}
Matthias Christandl, Artur Ekert, Michał Horodecki, Paweł Horodecki, Jonathan
  Oppenheim, and Renato Renner.
\newblock Unifying classical and quantum key distillation.
\newblock In SalilP. Vadhan, editor, {\em Theory of Cryptography}, volume 4392
  of {\em Lecture Notes in Computer Science}, pages 456--478. Springer Berlin
  Heidelberg, 2007.
\newblock \href {http://dx.doi.org/10.1007/978-3-540-70936-7_25}
  {\path{doi:10.1007/978-3-540-70936-7_25}}.

\bibitem[CFH15a]{Chitambar-2014d}
Eric Chitambar, Ben Fortescue, and Min-Hsiu Hsieh.
\newblock Classical analog to entanglement reversibility.
\newblock {\em Phys. Rev. Lett.}, 115:090501, Aug 2015.
\newblock \href {http://dx.doi.org/10.1103/PhysRevLett.115.090501}
  {\path{doi:10.1103/PhysRevLett.115.090501}}.

\bibitem[CFH15b]{Chitambar-2014c}
Eric Chitambar, Benjamin Fortescue, and Min-Hsiu Hsieh.
\newblock Distributions attaining secret key at a rate of the conditional
  mutual information.
\newblock In {\em Advances in Cryptology -- CRYPTO 2015}, volume 9216, pages
  443--462. Springer Berlin Heidelberg, 2015.
\newblock \href {http://dx.doi.org/10.1007/978-3-662-48000-7_22}
  {\path{doi:10.1007/978-3-662-48000-7_22}}.

\bibitem[CH13]{Chitambar-2013a}
Eric Chitambar and Min-Hsiu Hsieh.
\newblock Revisiting the optimal detection of quantum information.
\newblock {\em Phys. Rev. A}, 88:020302, Aug 2013.
\newblock \href {http://dx.doi.org/10.1103/PhysRevA.88.020302}
  {\path{doi:10.1103/PhysRevA.88.020302}}.

\bibitem[CH14]{Chitambar-2013c}
Eric Chitambar and Min-Hsiu Hsieh.
\newblock Asymptotic state discrimination and a strict hierarchy in
  distinguishability norms.
\newblock {\em Journal of Mathematical Physics}, 55(11), 2014.
\newblock \href {http://dx.doi.org/http://dx.doi.org/10.1063/1.4902027}
  {\path{doi:http://dx.doi.org/10.1063/1.4902027}}.

\bibitem[Chi11]{Chitambar-2011a}
Eric Chitambar.
\newblock Local quantum transformations requiring infinite rounds of classical
  communication.
\newblock {\em Phys. Rev. Lett.}, 107(19):190502, Nov 2011.
\newblock \href {http://dx.doi.org/10.1103/PhysRevLett.107.190502}
  {\path{doi:10.1103/PhysRevLett.107.190502}}.

\bibitem[CHW16]{Chitambar-2014e}
E.~Chitambar, M.~H. Hsieh, and A.~Winter.
\newblock The private and public correlation cost of three random variables
  with collaboration.
\newblock {\em IEEE Transactions on Information Theory}, 62(4):2034--2043,
  April 2016.
\newblock \href {http://dx.doi.org/10.1109/TIT.2016.2530086}
  {\path{doi:10.1109/TIT.2016.2530086}}.

\bibitem[CK11]{Csiszar-2011a}
Imre Csisz\'ar and Janos K\"orner.
\newblock {\em Information Theory: Coding Theorems for Discrete Memoryless
  Systems}.
\newblock Cambridge University Press, Cambridge, UK, 2011.

\bibitem[CLM{\etalchar{+}}14]{Chitambar-2014b}
Eric Chitambar, Debbie Leung, Laura Mančinska, Maris Ozols, and Andreas
  Winter.
\newblock Everything you always wanted to know about locc (but were afraid to
  ask).
\newblock {\em Communications in Mathematical Physics}, 328(1):303--326, 2014.
\newblock \href {http://dx.doi.org/10.1007/s00220-014-1953-9}
  {\path{doi:10.1007/s00220-014-1953-9}}.

\bibitem[CP02]{Collins-2002a}
Daniel Collins and Sandu Popescu.
\newblock Classical analog of entanglement.
\newblock {\em Phys. Rev. A}, 65:032321, 2002.
\newblock \href {http://dx.doi.org/10.1103/PhysRevA.65.032321}
  {\path{doi:10.1103/PhysRevA.65.032321}}.

\bibitem[Dev05]{Devetak-2005b}
I.~Devetak.
\newblock The private classical capacity and quantum capacity of a quantum
  channel.
\newblock {\em IEEE Transactions on Information Theory}, 51(1):44--55, Jan
  2005.
\newblock \href {http://dx.doi.org/10.1109/TIT.2004.839515}
  {\path{doi:10.1109/TIT.2004.839515}}.

\bibitem[GK73]{Gacs-1973a}
P.~G\'{a}cs and J.~K\"{o}rner.
\newblock Common information is far less than mutual information.
\newblock {\em Problems of Control and Information Theory}, 2(2):149, 1973.

\bibitem[GRW02]{Gisin-2002b}
N.~Gisin, R.~Renner, and S.~Wolf.
\newblock Linking classical and quantum key agreement: Is there a classical
  analog to bound entanglement?
\newblock {\em Algorithmica}, 34(4):389--412, 2002.
\newblock \href {http://dx.doi.org/10.1007/s00453-002-0972-7}
  {\path{doi:10.1007/s00453-002-0972-7}}.

\bibitem[HHH98]{Horodecki-1998b}
Pawe\l{} Horodecki, Ryszard Horodecki, and Micha\l{} Horodecki.
\newblock Entanglement and thermodynamical analogies.
\newblock {\em Acta Physica Slovaca}, 48:141, 1998.

\bibitem[HHHH09]{Horodecki-2009a}
Ryszard Horodecki, Pawe\l Horodecki, Micha\l Horodecki, and Karol Horodecki.
\newblock Quantum entanglement.
\newblock {\em Rev. Mod. Phys.}, 81(2):865, 2009.
\newblock \href {http://dx.doi.org/10.1103/RevModPhys.81.865}
  {\path{doi:10.1103/RevModPhys.81.865}}.

\bibitem[HHHO05]{Horodecki-2005c}
Karol Horodecki, Michał Horodecki, Pawel Horodecki, and Jonathan Oppenheim.
\newblock Information theories with adversaries, intrinsic information, and
  entanglement.
\newblock {\em Foundations of Physics}, 35(12):2027--2040, 2005.
\newblock \href {http://dx.doi.org/10.1007/s10701-005-8660-5}
  {\path{doi:10.1007/s10701-005-8660-5}}.

\bibitem[Kas85]{Kaspi-1985a}
A.~Kaspi.
\newblock Two-way source coding with a fidelity criterion.
\newblock {\em IEEE Transactions on Information Theory}, 31(6):735--740, Nov
  1985.
\newblock \href {http://dx.doi.org/10.1109/TIT.1985.1057118}
  {\path{doi:10.1109/TIT.1985.1057118}}.

\bibitem[LP01]{Lo-2001a}
Hoi-Kwong Lo and Sandu Popescu.
\newblock Concentrating entanglement by local actions: {B}eyond mean values.
\newblock {\em Phys. Rev. A}, 63(2):022301, Jan 2001.
\newblock \href {http://dx.doi.org/10.1103/PhysRevA.63.022301}
  {\path{doi:10.1103/PhysRevA.63.022301}}.

\bibitem[Mau93]{Maurer-1993a}
U.M. Maurer.
\newblock Secret key agreement by public discussion from common information.
\newblock {\em Information Theory, IEEE Transactions on}, 39(3):733--742, 1993.
\newblock \href {http://dx.doi.org/10.1109/18.256484}
  {\path{doi:10.1109/18.256484}}.

\bibitem[MI11]{Ma-2011a}
N.~Ma and P.~Ishwar.
\newblock Some results on distributed source coding for interactive function
  computation.
\newblock {\em IEEE Transactions on Information Theory}, 57(9):6180--6195, Sept
  2011.
\newblock \href {http://dx.doi.org/10.1109/TIT.2011.2161916}
  {\path{doi:10.1109/TIT.2011.2161916}}.

\bibitem[Nat13]{Nathanson-2013a}
Michael Nathanson.
\newblock Three maximally entangled states can require two-way local operations
  and classical communication for local discrimination.
\newblock {\em Phys. Rev. A}, 88:062316, Dec 2013.
\newblock \href {http://dx.doi.org/10.1103/PhysRevA.88.062316}
  {\path{doi:10.1103/PhysRevA.88.062316}}.

\bibitem[Nie99]{Nielsen-1999a}
Michael~A. Nielsen.
\newblock Conditions for a class of entanglement transformations.
\newblock {\em Phys. Rev. Lett.}, 83(2):436--439, Jul 1999.
\newblock \href {http://dx.doi.org/10.1103/PhysRevLett.83.436}
  {\path{doi:10.1103/PhysRevLett.83.436}}.

\bibitem[OH08]{Owari-2008a}
Masaki Owari and Masahito Hayashi.
\newblock Two-way classical communication remarkably improves local
  distinguishability.
\newblock {\em New J. Phys.}, 10(1):013006, 2008.
\newblock \href {http://dx.doi.org/10.1088/1367-2630/10/1/013006}
  {\path{doi:10.1088/1367-2630/10/1/013006}}.

\bibitem[Orl91]{Orlitsky-1991a}
A.~Orlitsky.
\newblock Worst-case interactive communication. ii. two messages are not
  optimal.
\newblock {\em IEEE Transactions on Information Theory}, 37(4):995--1005, Jul
  1991.
\newblock \href {http://dx.doi.org/10.1109/18.86993}
  {\path{doi:10.1109/18.86993}}.

\bibitem[OSS14]{Ozols-2014a}
Maris Ozols, Graeme Smith, and John~A. Smolin.
\newblock Bound entangled states with a private key and their classical
  counterpart.
\newblock {\em Phys. Rev. Lett.}, 112:110502, Mar 2014.
\newblock \href {http://dx.doi.org/10.1103/PhysRevLett.112.110502}
  {\path{doi:10.1103/PhysRevLett.112.110502}}.

\bibitem[OSW08]{Oppenheim-2008a}
J.~Oppenheim, R.~W. Spekkens, and A.~Winter.
\newblock A classical analogue of negative information.
\newblock Accepted into Phys. Rev. Lett., 2008.
\newblock \href {http://arxiv.org/abs/arXiv:quant-ph/0511247v2}
  {\path{arXiv:arXiv:quant-ph/0511247v2}}.

\bibitem[RW03]{Renner-2003a}
Renato Renner and Stefan Wolf.
\newblock New bounds in secret-key agreement: The gap between formation and
  secrecy extraction.
\newblock In {\em Advances in Cryptology — EUROCRYPT 2003}, volume 2656 of
  {\em Lecture Notes in Computer Science}, pages 562--577. Springer Berlin
  Heidelberg, 2003.
\newblock \href {http://dx.doi.org/10.1007/3-540-39200-9_35}
  {\path{doi:10.1007/3-540-39200-9_35}}.

\bibitem[Sha49]{Shannon-1949a}
C.~E. Shannon.
\newblock Communication theory of secrecy systems.
\newblock {\em The Bell System Technical Journal}, 28(4):656--715, Oct 1949.
\newblock \href {http://dx.doi.org/10.1002/j.1538-7305.1949.tb00928.x}
  {\path{doi:10.1002/j.1538-7305.1949.tb00928.x}}.

\bibitem[TH00]{Terhal-2000a}
Barbara~M. Terhal and Pawe\l{} Horodecki.
\newblock Schmidt number for density matrices.
\newblock {\em Phys. Rev. A}, 61:040301, Mar 2000.
\newblock \href {http://dx.doi.org/10.1103/PhysRevA.61.040301}
  {\path{doi:10.1103/PhysRevA.61.040301}}.

\bibitem[VWW04]{Vollbrecht-2004a}
Karl Gerd~H. Vollbrecht, Reinhard~F. Werner, and Michael~M. Wolf.
\newblock Irreversibility of entanglement distillation for a class of symmetric
  states.
\newblock {\em Phys. Rev. A}, 69:062304, Jun 2004.
\newblock \href {http://dx.doi.org/10.1103/PhysRevA.69.062304}
  {\path{doi:10.1103/PhysRevA.69.062304}}.

\bibitem[Win05]{Winter-2005a}
A.~Winter.
\newblock Secret, public and quantum correlation cost of triples of random
  variables.
\newblock In {\em Information Theory, 2005. ISIT 2005. Proceedings.
  International Symposium on}, pages 2270--2274, Sept 2005.
\newblock \href {http://dx.doi.org/10.1109/ISIT.2005.1523752}
  {\path{doi:10.1109/ISIT.2005.1523752}}.

\bibitem[WSM16]{Wakakuwa-2016a}
Eyuri Wakakuwa, Akihito Soeda, and Mio Murao.
\newblock A four-round locc protocol outperforms all two-round protocols in
  reducing the entanglement cost for a distributed quantum information
  processing, 2016.
\newblock arXiv:1608.07461.

\bibitem[Wyn75]{Wyner-1975a}
A.D. Wyner.
\newblock The common information of two dependent random variables.
\newblock {\em Information Theory, IEEE Transactions on}, 21(2):163--179, Mar
  1975.
\newblock \href {http://dx.doi.org/10.1109/TIT.1975.1055346}
  {\path{doi:10.1109/TIT.1975.1055346}}.

\bibitem[XD07]{Xin-2007a}
Yu~Xin and Runyao Duan.
\newblock Conditions for entanglement transformation between a class of
  multipartite pure states with generalized schmidt decompositions.
\newblock {\em Phys. Rev. A}, 76:044301, Oct 2007.
\newblock \href {http://dx.doi.org/10.1103/PhysRevA.76.044301}
  {\path{doi:10.1103/PhysRevA.76.044301}}.

\bibitem[Yao79]{Yao-1979a}
Andrew Chi-Chih Yao.
\newblock Some complexity questions related to distributive computing.
\newblock In {\em Proceedings of the Eleventh Annual ACM Symposium on Theory of
  Computing}, STOC '79, pages 209--213, New York, NY, USA, 1979. ACM.
\newblock URL: \url{http://doi.acm.org/10.1145/800135.804414}, \href
  {http://dx.doi.org/10.1145/800135.804414} {\path{doi:10.1145/800135.804414}}.

\bibitem[Yao93]{Yao-1993a}
Andrew Chi-Chih Yao.
\newblock Quantum circuit complexity.
\newblock In {\em Proceedings of the 34th Annual IEEE Symposium on Foundations
  of Computer Science}, FOCS '93, pages 352--361, New York, NY, USA, 1993.
  IEEE.

\end{thebibliography}

\end{document}